\newcommand{\beq}{\begin{equation}}
\newcommand{\eeq}{\end{equation}}
\newcommand{\bea}{\begin{eqnarray}}
\newcommand{\eea}{\end{eqnarray}}
\newcommand{\beqa}{\begin{eqnarray}}
\newcommand{\eeqa}{\end{eqnarray}}
\newcommand{\nn}{\nonumber}
\newcommand\noi{\noindent}
\newtheorem{theorem}{Theorem}
\newtheorem{conjecture}[theorem]{Conjecture}
\newtheorem{corollary}[theorem]{Corollary}
\newtheorem{definition}[theorem]{Definition}
\newtheorem{lemma}[theorem]{Lemma}
\newtheorem{proposition}[theorem]{Proposition}
\newtheorem{remark}[theorem]{Remark}
\begin{document}

\title{Galois Differential Algebras and Categorical Discretization of Dynamical Systems}
\author{Piergiulio Tempesta}
\address{Departamento de F\'{\i}sica Te\'{o}rica II (M\'{e}todos Matem\'{a}ticos de la f\'isica), Facultad de F\'{\i}sicas, Universidad
Complutense de Madrid, 28040 -- Madrid, Spain and Instituto de Ciencias Matem\'aticas, C/ Nicol\'as Cabrera, No 13--15, 28049 Madrid, Spain}
\email{p.tempesta@fis.ucm.es, piergiulio.tempesta@icmat.es}

\begin{abstract}
The problem of integrability-preserving discretization of dynamical systems with variable coefficients is solved by means of an approach based on category theory. In this setting, a differential equation and its discrete analogs are considered to be different representations of the same abstract equation. The proposed theory crucially depends on the existence of covariant functors among the Rota category of Galois differential algebras and suitable categories of abstract dynamical systems. In this way, a new class of integrable maps is constructed, sharing with their continuous analogs a large set of exact solutions and, in the linear case, the Picard-Vessiot group.

\end{abstract}

\date{April 25, 2015}

\maketitle

\tableofcontents

\vspace{3mm}

\section{Introduction}
\subsection{Statement of the problem}
The discretization of dynamical systems in an \textit{integrability preserving} way has been widely investigated in the last decades. Potentially, it has  a great impact in many different areas, such as discrete mathematics, algorithm theory, numerical analysis, statistical mechanics, etc. For instance, the lattice renormalization of field theories \cite{MM}, as well as several modern approaches to quantum mechanics \cite{FL} and quantum gravity \cite{ash}, \cite{thooft} require an appropriate discretization of field equations.

Discrete differential geometry and related algebraic aspects have also recently attracted much attention \cite{BS1}, \cite{BS2}, \cite{GK}, \cite{Nov1}, \cite{Nov2}, \cite{Nov3}, \cite{LNP},  \cite{Kuper1},  \cite{MV}.

The aim of this paper is to propose a general solution to the problem of integrability preserving discretization of dynamical systems with \textit{variable coefficients}. At the best of our knowledge, for this case no approach is available in full generality.



The main theorems of the paper ensure the integrability-preserving discretization of linear equations of the form
\begin{equation}
a_{N}(t) \frac{d^N}{dt^N}z + a_{N-1}(t) \frac{d^{N-1}}{dt^{N-1}}z  +\ldots+a_{1}(t) \frac{d}{dt} z+a_{0}(t) z +c_{0}(t)=0, \label{lincont}
\end{equation}
and  nonlinear equations of the form
\begin{equation}
\frac{d^m}{dt^m}z= a_{N}(t) z^{N}+a_{N-1}(t) z^{N-1}+\ldots+a_{1}(t) z+ a_{0}(t), \label{nonlincont}
\end{equation}
where $N\in\mathbb{N}$, $z:\mathbb{R}_{+} \cup \{0\}\rightarrow\mathbb{R}$ is a $\mathcal{C}^{\infty}$ function, $b_{0}(t)$ and $\{a_{0}(t),\ldots, a_{N}(t)\}$ are arbitrary polynomials in $t\in\mathbb{R}$.

In \cite{TempestaJDE}, the class of systems \eqref{nonlincont} was studied in the much simpler case $a_{0},\ldots, a_{N} \in\mathbb{R}$.

The main idea is to associate each of the systems \eqref{lincont} and \eqref{nonlincont} with an abstract equation, defined in a Galois differential algebra. Then  we realize the Galois algebra in two steps: i) by selecting a difference operator of a suitable class; ii) by endowing the algebra with a product under which the difference operator acts as a derivation.  If we represent the abstract equation as a difference equation over the Galois algebra so constructed, the discrete equation obtained shares with the continuous one some of the most relevant analytic properties.

A crucial aspect is that the discrete versions of eqs. \eqref{lincont} and \eqref{nonlincont}, obtained according to the procedure sketched above, can be interpreted as \textit{discrete analogs of integral differential equations}.  Indeed, these equations are nonlocal recurrences possessing the general form

\beq
P(\Delta) z(n)=\mathcal{E}[z(n)] \label{nonlocal}.
\eeq

Here $P(\Delta)$ is a polynomial in a discrete derivative of a given order (a delta operator), $z(n)$ is the dependent variable, $n$ the independent discrete variable, and $\mathcal{E}[z(n)]$ is a function of $z(n)$, depending on all values ranging from $n$ up to an initial point. We refer to this fact when we talk about the non-locality of the equation \eqref{nonlocal}.

The simplest case corresponds to the discretization of linear differential equations with constant coefficients. In this case, the associated discrete equations will be standard (i.e. local) difference equations, depending on a fixed number of values of $n$, determined by the order of the discrete derivative $\Delta$.

\subsection{Main results}

I) The integrable maps of the form \eqref{nonlocal}, which represent  discrete versions of the dynamical systems \eqref{lincont} or \eqref{nonlincont}, possess the following property: under suitable conditions, \textit{$\mathcal{C}^{\infty}$ solutions of the continuous systems \eqref{lincont} or \eqref{nonlincont} are converted into exact solutions of the associated integrable maps} \eqref{nonlocal}.

Precisely,  if $\sum_{k=0}^{\infty} a_{k} x^{k}$ is a $\mathcal{C}^{\infty}$ solution of an equation of the form \eqref{lincont} or \eqref{nonlincont}, then $\sum_{k=0}^{n} a_{k} p_{k}(n)$ (where $p_k(n)$ are suitable polynomials) for each $n$ will be a solution of the associated nonlocal discrete equation.

In this sense, the discrete equations \eqref{nonlocal} can also be considered as equations approximating the continuous ones up to the order $n$. Indeed, their exact solutions allow to construct  solutions of the continuous ones approximated up to the $n$-th order in their Taylor expansion.

II) The Galois theory  for the integrable maps proposed in the work is developed. In particular, it is shown that there exists an isomorphism between the \textit{Picard--Vessiot group} of a linear differential equation with constant coefficients and that of the associated difference equation obtained via our categorical approach.

\subsection{The procedure} The approach proposed combines the theory of Galois differential algebras, the theory of categories and the classical finite operator theory, in the formulation given by G.-C. Rota \cite{Rota}. Differential equations and difference analogs are interpreted as objects belonging to specific subcategories of suitable categories of dynamical systems; in this setting, the continuous and the infinitely many possible discrete versions of an equation correspond to different representations of the same abstract object. Equations representing the same object will be said to be \textit{categorically equivalent}. The main results of the paper apply indeed to categorically equivalent equations.  The point of view proposed here generalizes considerably the so-called Rota correspondence, according to the terminology used by many authors (see, for instance, \cite{LTW1} and references therein).

In \cite{TempestaJDE}, the notion of \textit{Rota differential algebra} has been introduced. It is a Galois differential algebra $\left(\mathcal{F_{\mathcal{L}}},\mathcal{Q}\right)$, where $(\mathcal{F_L}, +,\cdot, *)$ is an associative algebra of formal power series over a lattice $\mathcal{L}$ of points, endowed with an associative and commutative product "*", and $\mathcal{Q}$ is a \textit{delta operator} that acts as a \textit{derivation} with respect to this product. The idea of an adapted product, ensuring for difference operators the Leibnitz rule, was proposed in the important papers \cite{Ward}, \cite{BF}.
An analogous product has appeared, in a different context, in the theory of linear operators acting on polynomial spaces \cite{ismail}.

An intrinsic feature of the $"*"$-product is its non-locality, which is at the origin of the non-locality of the discrete equations \eqref{nonlocal}.   This property can be interpreted in terms of a no-go theorem proved in \cite{KSS}. It states that it is not possible to define a field theory on an infinite lattice endowed with a nontrivial product rule that satisfies at the same time the Leibniz rule, translational invariance and locality requirements. Besides, one should notice that a nonlocal behavior is also a very common feature in the theory of integrable difference equations.

The collection of all Rota algebras is a subcategory $\mathcal{R} (\mathcal{F},\mathcal{Q}) $ of the category of associative algebras \cite{Mac Lane}. To each choice of the variable coefficients in eqs. \eqref{lincont} and \eqref{nonlincont} it corresponds a category of linear equations $\mathcal{K}_{\{ a_{0},\ldots, a_{N},c_{0} \}}$ and of nonlinear equations $\mathcal{N}_{\{m; a_{0},\ldots, a_{N}\}}$ respectively.

Possible discretizations of a continuous equation are expressed in terms of morphisms within the same subcategory. Two functors $F: \mathcal{R} \rightarrow \mathcal{K}$ and $G: \mathcal{R} \rightarrow \mathcal{N}$  are introduced. They enable to map differential equations into difference equations by preserving the underlying differential structure. The integrability properties of a given continuous dynamical system are naturally inherited by the discrete ones associated with it and in particular, solutions of the continuous system are mapped isomorphically into solutions of the discrete ones.

An open problem is the extension of the present theory to the case of $q$-difference equations. A coherent Galois approach to this class of equations has been developed in \cite{diV}, \cite{Sauloy}. We hypothesize that a functorial correspondence can be constructed in this case, allowing a unified treatment of the Galois theory for differential equations and categorically equivalent $q$-difference equations. The discretization of partial differential equations is also a fundamental problem, that in principle can be treated with a similar categorical approach.


\section{Category theory and dynamical systems}

\subsection{Algebraic preliminaries}

The theory of delta operators, proposed in \cite{Rota} as a foundational formulation of combinatorics, has been developed in a vast body of literature (see also \cite{RR}, \cite{Roman}).
Let us denote  by $\mathcal{P}$ the space of polynomials in one variable $x\in \mathbb{K}$, where $\mathbb{K}$ is a field of characteristic zero and denote by $\mathbb{N}$ the set of nonnegative integers..  Let $\mathcal Q$ be a delta operator, and $\{p_n(x)\}_{n\in\mathbb{N}}$ be the  basic sequence of polynomials of order $n$ uniquely associated to $Q$ (see \cite{Roman}--\cite{Rota} for definitions and properties).  Let $\mathcal{F}$ be the algebra of formal power series in $x$. Since the polynomials $\{p_n(x)\}_{n\in\mathbb{N}}$ for every choice of $\mathcal Q$ provide a basis of $\mathcal{F}$, any $f\in\mathcal{F}$ can be expanded into a formal series of the form
$
f(x)=\sum_{n=0}^{\infty}a_n p_n(x) \label{exp}.
$
Let $\mathcal{L}$ be a lattice of points on the real line, isomorphic to $\mathbb{N}$. We shall denote by $\mathcal{F_L}$ the vector space of the formal power series defined on $\mathcal{L}$. The space $\mathcal{F}$ (and consequently $\mathcal{F_L}$) can be endowed with the structure of an algebra, by introducing the product defined, for each choice of $\mathcal Q$ by the relation
\begin{equation}
p_n(x)*p_m(x):=p_{n+m}(x). \label{starproduct}
\end{equation}
This product for the forward difference operator $\Delta$, has been proposed in \cite{Ward} and in \cite{ismail}. Given a choice of  $\mathcal{Q}$, one can prove that the space $(\mathcal{F}, +,\cdot, *_{\mathcal{Q}})$, endowed with the composition laws of sum of series, multiplication by a scalar and the $*$ product \eqref{starproduct} is an associative algebra.

%
%
In the following, with a slight abuse of notation we will use the symbol "$*_{\mathcal{Q}}$" whenever we wish to emphasize the dependence of the "*" product on the choice of $\mathcal{Q}$.
%
%
%
%
%
%
%

\subsection{Formal groups and delta operators} The theory of delta operators can be naturally put in the setting of formal group theory \cite{Ray}. It can be related to integrability properties of difference equations \cite{LTW1}, \cite{TempestaJDE} and number theoretical properties of zeta functions \cite{TempestaTAMS}.

Following \cite{Haze}, \cite{BMN}, let us consider the polynomial ring $\mathbb{Q}\left[ c_{1},c_{2},...\right]
$ and the formal group logarithm
$
F\left( u\right) =u+c_{1}\frac{u^{2}}{2}+c_{2}\frac{u^{3}}{3}+...
$
Let $G\left( v\right) $ be the associated inverse series, i.e. the formal group exponential
\begin{equation}
G\left( v\right) =v-c_{1}\frac{v^{2}}{2}+\left( 3c_{1}^{2}-2c_{2}\right)
\frac{v^{3}}{6}+...  \label{Ap12}
\end{equation}%
so that \textit{$F\left( G\left( v\right) \right) =v$}. The formal group law related to these series is
provided by
\begin{equation*}
\Phi \left( u_{1},u_{2}\right) =G\left( F\left( u_{1}\right) +F\left(
u_{2}\right) \right),
\end{equation*}%
and it represents the so called \textit{Lazard's Universal Formal Group
}. It is defined over the Lazard ring $L$, i.e. the subring of $%
\mathbb{Q}\left[ c_{1},c_{2},...\right] $ generated by the coefficients of
the power series $G\left( F\left( u_{1}\right) +F\left( u_{2}\right) \right)
$. In algebraic topology, delta operators are related to Thom classes and complex cobordism theory \cite{Ray2}.
A simple way to produce a family of difference delta operators is the following \cite{LTW1}. Let
\begin{equation}
\Delta_{p}=\frac{1}{\sigma}\sum_{k=l}^{m}\alpha_{k}T^{k}\text{,}\quad l\text{,
}m\in\mathbb{Z}\text{,}\mathbb{\quad}l<m\text{,\quad}m-l=p\text{,}\label{2.9}
\end{equation}
where $\sigma$ can be interpreted as a lattice spacing and $\alpha_{k}\,$ are constants  such that
\begin{equation}
\sum_{k=l}^{m}\alpha_{k}=0\text{,}\quad\sum_{k=l}^{m}k\alpha_{k}=c\text{.}\label{2.10}
\end{equation}
and $\alpha_m\neq0$, $\alpha_l\neq0$. We choose $c=1$, to reproduce possibly the derivative $D$ in the continuum limit. A difference operator of the
form (\ref{2.9}), which satisfies the equations (\ref{2.10}), is said to be a delta operator of
order $p$, if it approximates the continuous derivative up to terms of order $\sigma^p$. Apart the two conditions (\ref{2.10}), one can choose arbitrarily $m-l-1$
to fix all constants $\alpha_{k}$.
Consequently, by introducing the representation $T\sim e^{v}$,  with each delta operator of the family \eqref{2.9} one associates a realization of the one-dimensional Lazard universal formal group law. Conversely, with the identification $v\sim D$, to each formal group exponential it will correspond a delta operator (see also \cite{Ray}, \cite{TempestaTAMS}).
\subsection{The Rota category}
The notion of Rota algebra has been introduced in \cite{TempestaJDE}, as the natural Galois differential algebra over which the discretization procedure is carried out.

\begin{definition}
A Rota differential algebra is a Galois differential algebra $(\mathcal{F}, \mathcal{Q})$, where $(\mathcal{F}, +, \cdot, *_{\mathcal{Q}})$ is an associative algebra of formal power series, the product $*_{\mathcal{Q}}$ is the composition law defined by \eqref{starproduct}, and $\mathcal{Q}$ is a delta operator acting as a derivation on $\mathcal{F}$:
\begin{equation}
i) \quad \mathcal{Q}(a+b)=\mathcal{Q}(a)+\mathcal{Q}(b), \hspace{10mm} \mathcal{Q}(\lambda a)=\lambda Q(a), \quad \lambda\in\mathbb{K},
\end{equation}
\begin{equation}
ii) \quad \mathcal{Q}(a*b)= \mathcal{Q}(a)*b+a*\mathcal{Q}(b).
\end{equation}
\end{definition}

\noi As has been proved in \cite{TempestaJDE}, there exists a unique Rota algebra $(\mathcal{F}, \mathcal{Q})$ associated with a delta operator $\mathcal{Q}$.

\begin{definition}
The Rota category, denoted by $\mathcal{R}(\mathcal{F}, \mathcal{Q})$ is the collection of all Rota algebras $(\mathcal{F}, \mathcal{Q})$, with morphisms defined by
\begin{equation*}
\rho_{\mathcal{Q},\mathcal{Q}'}: \mathcal{R}(\mathcal{F}, \mathcal{Q}) \longrightarrow \mathcal{R}(\mathcal{F}, \mathcal{Q})
\end{equation*}
\begin{equation}
(\mathcal{F}, +, \cdot, *_{\mathcal Q})\longrightarrow (\mathcal{F'}, +, \cdot, *_{\mathcal{Q'}}) \label{rotamorphism}
\end{equation}
which are closed under composition.
\end{definition}
The action of the morphism $\rho_{\mathcal{Q},\mathcal{Q}'}$ on formal power series is defined by
\begin{equation}
\sum_{n} a_{n} p_{n} (x) \longrightarrow \sum_m a_{m} q_{m}(x), \label{morphism}
\end{equation}
where $\{p_{n}(x)\}_{n\in\mathbb{N}}$ and $\{q_{m}(x)\}_{m\in\mathbb{N}}$ are the basic sequences associated with $\mathcal{Q}$ and $\mathcal{Q}'$ respectively. The property of closure under composition is trivial.

\subsection{New categories for dynamical systems}
We propose two new definitions of categories for differential equations of the form \eqref{lincont} and \eqref{nonlincont}. Let us denote by $z^{*N}:=\underbrace{z*\ldots *z}_{N-times}$, with $N\in\mathbb{N}/\{0\}$.
\begin{definition}
For any choice of the set of functions $\{a_{0}(t),\ldots,a_{N}(t), c_{0}(t)\}$, where $a_{j}(t)= \sum_{m_j=0}^{M_{j}} \alpha_{j m_j} t^{m_{j}}, c_{0}(t)= \sum_{r=0}^{R} \gamma_{r}t^{r}$, with $\alpha_{j m_j}, \gamma_{r} \in\mathbb{R}$, $j=0,\ldots,N$, $N\in\mathbb{N}$, the category $\mathcal{K}_{\{ a_{0},\ldots, a_{N},c_{0}\}}$ of linear dynamical systems of order $N$ is the collection of all equations of the form
\begin{equation}
lin(\mathcal{Q},z,*_{\mathcal{Q}}):=a_{N}(t)* \mathcal{Q}^{*N} z + a_{N-1}(t)* \mathcal{Q}^{*N-1}z  +\ldots + a_{1}(t)* \mathcal{Q} z+a_{0}(t)* z +c_{0}(t)=0. \label{linabstr}
\end{equation}
\noi The set of correspondences
\beq
\lambda_{\mathcal{Q},\mathcal{Q}'}: \mathcal{K}_{\{ a_{0},\ldots, a_{N},c_{0} \}} \longrightarrow \mathcal{K}_{\{ a_{0},\ldots, a_{N},c_{0} \}},
\eeq
\beq
lin(\mathcal{Q},z,*) \longrightarrow lin(\mathcal{Q}',z,*') \label{morfeq},
\eeq
defines the class of morphisms of the category.
\end{definition}
In other words, given a differential equation of the form \eqref{lincont}, we can construct a category by considering as its objects the equation \eqref{lincont} jointly with all the infinitely many ``discretizations'' of if, obtained by varying $Q\in\mathcal{D}$.
\begin{definition}
For any choice of the set of functions $\{a_{0}(t),\ldots,a_{N}(t)\}$, where  $a_{j}(t)= \sum_{m_j=0}^{M_{j}} \alpha_{j m_j} t^{m_{j}}$, with $\alpha_{j m_j} \in\mathbb{R}$, $j=0,\ldots,N$, $N\in\mathbb{N}$, the category $\mathcal{N}_{\{m; a_{0},\ldots, a_{N}\}}$ of nonlinear dynamical systems of order $m\in\mathbb{N}$ is the collection of all equations of the form
\beq
eq(\mathcal{Q},z,*_{\mathcal{Q}}):= \mathcal{Q}^m z- a_{N}(t)*z^{*N}-a_{N-1}(t)*z^{*N-1}-\ldots-a_{1}(t)*z-a_{0}(t)=0. \label{nonlinabstr}
\eeq
The set of correspondences
\beq
\nu_{\mathcal{Q},\mathcal{Q}'}: \mathcal{N}_{\{m; a_{0},\ldots, a_{N}\}} \longrightarrow \mathcal{N}_{\{m; a_{0},\ldots, a_{N}\}},
\eeq
\beq
eq(\mathcal{Q},z,*) \longrightarrow eq(\mathcal{Q}',z,*') \label{morfeq},
\eeq
defines the class of morphisms of the category.
\end{definition}
\noi The closure of the morphisms $\lambda_{\mathcal{Q},\mathcal{Q}'}$ and $\nu_{\mathcal{Q},\mathcal{Q}'}$ under composition is easily verified.
\begin{definition}
We shall say that two objects belonging to the category $\mathcal{K}_{\{ a_{0},\ldots, a_{N},c_{0}\}}$ (or to $\mathcal{N}_{\{m; a_{0},\ldots, a_{N}\}}$) represent two categorically equivalent equations. Alternatively, two equations will be said to be categorically equivalent if there exists a morphism of categories $\lambda_{\mathcal{Q},\mathcal{Q}'}$ (or $\nu_{\mathcal{Q},\mathcal{Q}'}$) that maps them into each other.
\end{definition}
An interesting property is the structure in subcategories of the categories defined above. The inclusions are obtained by means of the natural identifications: $ \mathcal{K}_{\{ a_{0},\ldots, a_{k}\}}:= \mathcal{K}_{\{ a_{0},\ldots, a_{k},\underbrace{0,\ldots,0}_{(N-k)-times}\}}$ and $ \mathcal{N}_{\{m; a_{0},\ldots, a_{k}\}}:= \mathcal{N}_{\{ m; a_{0},\ldots, a_{k},\underbrace{0,\ldots,0}_{(N-k)-times}\}}$, for $k<N$.
\begin{lemma}
For any fixed $N\in\mathbb{N}$, and for any choice of the polynomials $\{ a_{0},\ldots, a_{N},c_{0}\}$, there exists a filtration of subcategories, given by the finite sequences
\beq
\mathcal{K}_{\{ a_{0}\}}\subset\mathcal{K}_{\{ a_{0},a_{1}\}}\subset\ldots\subset\mathcal{K}_{\{ a_{0},\ldots, a_{k}}\}\subset\ldots\subset\mathcal{K}_{\{ a_{0},\ldots, a_{N},c_{0}\}} \label{filtrF}
\eeq
and
\beq
\mathcal{N}_{\{m; a_{0}\}}\subset\mathcal{N}_{\{m;  a_{0},a_{1}\}}\subset\ldots\subset\mathcal{N}_{\{m; a_{0},\ldots, a_{k}}\}\subset\ldots\subset\mathcal{N}_{\{m; a_{0},\ldots, a_{N}\}}  \label{filtrG}
\eeq
respectively.
\end{lemma}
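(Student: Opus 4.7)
The plan is to treat the two claimed filtrations in parallel, since the only structural ingredient needed is that the ``padding by zeros'' convention immediately provides an inclusion of objects and of morphisms at each step. I would first unpack this convention, then check the subcategory axioms (inclusion of objects, inclusion of morphisms, compatibility with composition and identities), and finally justify that each inclusion is proper.

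First I would fix $k<N$ and, using the stated identification $\mathcal{K}_{\{a_{0},\ldots,a_{k}\}} := \mathcal{K}_{\{a_{0},\ldots,a_{k},0,\ldots,0\}}$, observe that any object of $\mathcal{K}_{\{a_{0},\ldots,a_{k}\}}$ is the equation
\begin{equation*}
a_{k}(t)\ast \mathcal{Q}^{\ast k} z+\cdots +a_{0}(t)\ast z+c_{0}(t)=0,
\end{equation*}
which coincides, as an equation in the Rota algebra, with
\begin{equation*}
0\cdot \mathcal{Q}^{\ast(k+1)} z + a_{k}(t)\ast \mathcal{Q}^{\ast k} z+\cdots +a_{0}(t)\ast z+c_{0}(t)=0,
\end{equation*}
and hence is an object of $\mathcal{K}_{\{a_{0},\ldots,a_{k+1}\}}$. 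Thus the objects of the smaller category are canonically embedded into those of the larger.

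Next I would verify inclusion of morphisms. A morphism $\lambda_{\mathcal{Q},\mathcal{Q}'}$ of $\mathcal{K}_{\{a_{0},\ldots,a_{k}\}}$ is, by construction, the map that replaces $\mathcal{Q}$ and $\ast_{\mathcal{Q}}$ by $\mathcal{Q}'$ and $\ast_{\mathcal{Q}'}$ while leaving the coefficient tuple unchanged. Applied to the padded equation of the previous paragraph, the same rule produces the morphism $\lambda_{\mathcal{Q},\mathcal{Q}'}$ in $\mathcal{K}_{\{a_{0},\ldots,a_{k+1}\}}$, because zero coefficients are unaffected by the change of delta operator. Compatibility with composition is immediate from $\lambda_{\mathcal{Q}',\mathcal{Q}''}\circ \lambda_{\mathcal{Q},\mathcal{Q}'}=\lambda_{\mathcal{Q},\mathcal{Q}''}$, which holds in both categories and is preserved by the embedding, and the identity $\lambda_{\mathcal{Q},\mathcal{Q}}$ corresponds to itself. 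This establishes that $\mathcal{K}_{\{a_{0},\ldots,a_{k}\}}$ is a subcategory of $\mathcal{K}_{\{a_{0},\ldots,a_{k+1}\}}$. Iterating over $k=0,1,\ldots,N-1$ (and including the final step that appends $c_{0}$) produces the filtration \eqref{filtrF}. For \eqref{filtrG} the argument is verbatim the same, with $lin(\mathcal{Q},z,\ast_{\mathcal{Q}})$ replaced by $eq(\mathcal{Q},z,\ast_{\mathcal{Q}})$ and the linear terms $a_{j}(t)\ast \mathcal{Q}^{\ast j}z$ replaced by the monomials $a_{j}(t)\ast z^{\ast j}$.

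Strictness of each inclusion is the only point that needs a small separate remark: provided the polynomial $a_{k+1}$ appearing in the filtration data is not identically zero, the larger category $\mathcal{K}_{\{a_{0},\ldots,a_{k+1}\}}$ contains equations whose $\mathcal{Q}^{\ast(k+1)}$ coefficient is nontrivial, and these are not objects of the padded version of $\mathcal{K}_{\{a_{0},\ldots,a_{k}\}}$; otherwise the two categories coincide and the corresponding inclusion in the filtration is trivial. I expect no real obstacle in this proof: the content is essentially bookkeeping, and the only point worth stating carefully is that the ``padding by zero'' convention simultaneously preserves the object class and the class of morphisms $\lambda_{\mathcal{Q},\mathcal{Q}'}$ (respectively $\nu_{\mathcal{Q},\mathcal{Q}'}$), so that the set-theoretic chain of inclusions automatically lifts to a chain of subcategories.
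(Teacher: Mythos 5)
Your argument follows the same route as the paper's proof, which consists of a single observation: the restriction of the morphisms $\lambda_{\mathcal{Q},\mathcal{Q}'}$ (resp.\ $\nu_{\mathcal{Q},\mathcal{Q}'}$) to each subset preserves closure under composition and hence defines morphisms of subcategories. Your version spells out the bookkeeping (objects via zero-padding, morphisms, composition, identities), so on the morphism side you establish exactly what the paper establishes, only in more detail.

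There is, however, an internal tension in your write-up. In your first paragraph you assert that the zero-padded equation
\[
0\cdot \mathcal{Q}^{*(k+1)} z + a_{k}(t)* \mathcal{Q}^{*k} z+\cdots +a_{0}(t)* z+c_{0}(t)=0
\]
is an object of $\mathcal{K}_{\{a_{0},\ldots,a_{k+1}\}}$. But by the paper's definition that category is attached to a \emph{fixed} tuple of coefficient polynomials, and every one of its objects carries the coefficient $a_{k+1}(t)$ in the $\mathcal{Q}^{*(k+1)}$ slot. Hence the padded equation lies in $\mathcal{K}_{\{a_{0},\ldots,a_{k+1}\}}$ only when $a_{k+1}\equiv 0$ --- precisely the case your final paragraph dismisses as the trivial one. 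When $a_{k+1}\not\equiv 0$, \emph{no} object of $\mathcal{K}_{\{a_{0},\ldots,a_{k}\}}$ is an object of $\mathcal{K}_{\{a_{0},\ldots,a_{k+1}\}}$, so the object inclusion claimed in your first paragraph and the strictness claim in your last paragraph cannot both hold. This defect is largely inherited from the statement of the lemma itself (the paper's own proof sidesteps it by speaking only of restricting morphisms and never addressing objects); to make the object-level inclusion honest you must either take the higher coefficients to vanish, or reinterpret the filtration as a chain of faithful embeddings determined by the common morphism structure $\lambda_{\mathcal{Q},\mathcal{Q}'}$ rather than as literal set-theoretic containments.
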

\begin{proof}
For each of the subsets $\mathcal{K}_{\{ a_{0},\ldots, a_{k}}\}$, the restriction of the morphisms $\lambda_{\mathcal{Q},\mathcal{Q}'}$ over these subsets still preserves the closure under composition and defines morphisms of subcategories. The same argument holds for the sequence related to $\mathcal{N}_{\{a_{0},\ldots, a_{N}\}}$.
\end{proof}

As a consequence of the previous construction, we can define functors relating the category of Rota differential algebras with those of abstract dynamical systems defined above.
\begin{theorem} \label{functor}
For any choice of the functions $\{ a_{0}(t),\ldots, a_{N}(t),c_{0}(t)\}$, the application
\beq
F: \mathcal{R}(\mathcal{F},\mathcal{Q}) \longrightarrow \mathcal{K}_{\{ a_{0},\ldots, a_{N},c_{0} \}}, \label{functorF}
\eeq
\[
(\mathcal{F}+, \cdot, *_{\mathcal{Q}}) \longrightarrow lin(\mathcal{Q},z,*),
\]
\[
\rho_{\mathcal{Q},\mathcal{Q}'} \longrightarrow \lambda_{\mathcal{Q},\mathcal{Q}'},
\]
is a covariant functor.
\end{theorem}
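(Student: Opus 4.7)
The plan is to verify the three defining properties of a covariant functor in order: that $F$ is well-defined on objects, that $F$ is well-defined on morphisms, and that $F$ preserves identities and composition in the correct (covariant) direction.

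First I would spell out the object assignment. An object of $\mathcal{R}(\mathcal{F},\mathcal{Q})$ is a Rota algebra $(\mathcal{F},+,\cdot,*_{\mathcal{Q}})$ determined uniquely by its delta operator $\mathcal{Q}$, since (by the uniqueness result cited from \cite{TempestaJDE}) the product $*_{\mathcal{Q}}$ is fixed by the defining relation $p_n*_{\mathcal{Q}}p_m=p_{n+m}$ on the associated basic sequence. With the polynomial coefficients $\{a_0(t),\ldots,a_N(t),c_0(t)\}$ fixed once and for all, the formal expression $lin(\mathcal{Q},z,*_{\mathcal{Q}})$ of \eqref{linabstr} is then well-posed as an element of $\mathcal{K}_{\{a_0,\ldots,a_N,c_0\}}$, because every ingredient in \eqref{linabstr} (the derivation powers $\mathcal{Q}^{*k}$, the starproducts with the coefficient series, and the constant term) is intrinsic to the algebra $(\mathcal{F},+,\cdot,*_{\mathcal{Q}})$. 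Hence the rule $(\mathcal{F},+,\cdot,*_{\mathcal{Q}})\mapsto lin(\mathcal{Q},z,*_{\mathcal{Q}})$ is a map of object classes.

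Next I would check the morphism assignment. A morphism $\rho_{\mathcal{Q},\mathcal{Q}'}$ of the Rota category is, by \eqref{rotamorphism}--\eqref{morphism}, nothing more than the change of basis $p_n\mapsto q_n$ between the basic sequences associated with $\mathcal{Q}$ and $\mathcal{Q}'$, which carries the algebra structure $(\mathcal{F},+,\cdot,*_{\mathcal{Q}})$ to $(\mathcal{F}',+,\cdot,*_{\mathcal{Q}'})$. Applied to the abstract equation $lin(\mathcal{Q},z,*_{\mathcal{Q}})$, this substitution rewrites each term $a_j(t)*_{\mathcal{Q}}\mathcal{Q}^{*j}z$ as $a_j(t)*_{\mathcal{Q}'}\mathcal{Q}'^{\,*j}z$, producing exactly $lin(\mathcal{Q}',z,*_{\mathcal{Q}'})$. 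This is, by Definition of the target category, the morphism $\lambda_{\mathcal{Q},\mathcal{Q}'}$ of \eqref{morfeq}, so setting $F(\rho_{\mathcal{Q},\mathcal{Q}'}):=\lambda_{\mathcal{Q},\mathcal{Q}'}$ is well-defined.

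Finally I would verify functoriality. For identities, when $\mathcal{Q}=\mathcal{Q}'$ the basis substitution $p_n\mapsto p_n$ is trivial, so $F(\mathrm{id}_{(\mathcal{F},*_{\mathcal{Q}})})=\lambda_{\mathcal{Q},\mathcal{Q}}=\mathrm{id}_{lin(\mathcal{Q},z,*)}$. For composition, given $\rho_{\mathcal{Q},\mathcal{Q}'}$ and $\rho_{\mathcal{Q}',\mathcal{Q}''}$ with basic sequences $\{p_n\},\{q_n\},\{r_n\}$, their composition in $\mathcal{R}$ is the double basis change $p_n\mapsto r_n$, i.e.\ $\rho_{\mathcal{Q}',\mathcal{Q}''}\circ\rho_{\mathcal{Q},\mathcal{Q}'}=\rho_{\mathcal{Q},\mathcal{Q}''}$. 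On the target side the successive substitutions of delta operators and starproducts in \eqref{linabstr} coincide with the single substitution $\mathcal{Q}\mapsto\mathcal{Q}''$, giving $\lambda_{\mathcal{Q}',\mathcal{Q}''}\circ\lambda_{\mathcal{Q},\mathcal{Q}'}=\lambda_{\mathcal{Q},\mathcal{Q}''}$, in the same order (hence covariance, not contravariance). Putting the two equalities together yields $F(\rho_{\mathcal{Q}',\mathcal{Q}''}\circ\rho_{\mathcal{Q},\mathcal{Q}'})=F(\rho_{\mathcal{Q}',\mathcal{Q}''})\circ F(\rho_{\mathcal{Q},\mathcal{Q}'})$.

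The only step that requires genuine care, and which I regard as the main obstacle, is justifying that applying the substitution $\{p_n\}\to\{q_n\}$ termwise inside the formal expression \eqref{linabstr} really produces the equation whose delta operator and product are $\mathcal{Q}'$ and $*_{\mathcal{Q}'}$; this hinges on the fact that the defining relation $p_n*_{\mathcal{Q}}p_m=p_{n+m}$ is the same structural law as $q_n*_{\mathcal{Q}'}q_m=q_{n+m}$, so the basis-change intertwines the two starproducts, and on the analogous statement that $\mathcal{Q}p_n=np_{n-1}$ together with $\mathcal{Q}'q_n=nq_{n-1}$ guarantees intertwining of the delta operators. Once this compatibility is made explicit, the remainder is a direct bookkeeping check.
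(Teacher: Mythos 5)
Your proposal is correct and follows essentially the same route as the paper's proof, which is simply a direct verification that $F$ preserves identity morphisms and composition (the paper leaves the bookkeeping implicit, whereas you spell out the object and morphism assignments and the intertwining of the basic sequences). No gap; your treatment is just a more detailed version of the same argument.
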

\begin{proof}
A direct verification shows that $F$ preserves the composition of morphisms:
\[
F(\lambda_{\mathcal{Q}'',\mathcal{Q}'}\circ \lambda_{\mathcal{Q}',\mathcal{Q}})=F(\lambda_{\mathcal{Q}'',\mathcal{Q}'})\circ F(\lambda_{\mathcal{Q}',\mathcal{Q}}).
\]
If we denote by $id_{\mathcal{Q}}:=\lambda_{\mathcal{Q},\mathcal{Q}}$ the identity morphism, we also have
\[
F(id_{\mathcal{Q}}(\mathcal{A}))=id_{\mathcal{Q}}(F(\mathcal{A})),
\]
where $\mathcal{A}\in \mathcal{R}(\mathcal{F}, \mathcal{Q})$.
\end{proof}
\noi In the same manner one can prove that the application
\beq
G: \mathcal{R}(\mathcal{F},\mathcal{Q}) \longrightarrow \mathcal{N}_{\{ m; a_{0},\ldots, a_{N} \}}, \label{functorG}
\eeq
\[
(\mathcal{F}+, \cdot, *_{\mathcal{Q}}) \longrightarrow eq(\mathcal{Q},z,*),
\]
\[
\rho_{\mathcal{Q},\mathcal{Q}'} \longrightarrow \nu_{\mathcal{Q},\mathcal{Q}'},
\]
is a covariant functor.

The functors \eqref{functorF} and \eqref{functorG} encode the main features of the discretization procedure we propose, and provide \textit{functorial Rota correspondences} among continuous and discrete dynamical systems. These functors can also be defined on the subcategories defined above.
\begin{corollary}
The restriction of the functors $F$ and $G$ to the subcategories defined by the filtrations \eqref{filtrF} and \eqref{filtrG} respectively keep being covariant functors.
\end{corollary}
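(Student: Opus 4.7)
The plan is to reduce the corollary to a direct application of Theorem \ref{functor}, by observing that each subcategory appearing in the filtrations \eqref{filtrF} and \eqref{filtrG} is itself a category of the very form for which Theorem \ref{functor} already furnishes a covariant functor. Concretely, using the natural identification $\mathcal{K}_{\{a_{0},\ldots,a_{k}\}}=\mathcal{K}_{\{a_{0},\ldots,a_{k},0,\ldots,0\}}$ (and analogously for $\mathcal{N}$), any subcategory in the filtration is recovered as the category associated with a specific admissible tuple of coefficient polynomials in which the last $N-k$ entries vanish identically.

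First I would define, for each $k\leq N$, the candidate restricted functor
\[
F_{k}:\ \mathcal{R}(\mathcal{F},\mathcal{Q})\longrightarrow \mathcal{K}_{\{a_{0},\ldots,a_{k}\}}
\]
by exactly the rule used to define $F$ in Theorem \ref{functor}: on objects it sends $(\mathcal{F},+,\cdot,*_{\mathcal{Q}})$ to the equation $lin(\mathcal{Q},z,*)$ of \eqref{linabstr} written with the truncated coefficient tuple, and on morphisms it sends $\rho_{\mathcal{Q},\mathcal{Q}'}$ to $\lambda_{\mathcal{Q},\mathcal{Q}'}$. The image lies in $\mathcal{K}_{\{a_{0},\ldots,a_{k}\}}$ because the corresponding higher-order summands of $lin(\mathcal{Q},z,*)$ vanish identically once their coefficients do. The same prescription produces $G_{k}$ for the nonlinear filtration, now using \eqref{nonlinabstr} and $\nu_{\mathcal{Q},\mathcal{Q}'}$.

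Second, I would verify covariance of $F_{k}$ and $G_{k}$ by checking preservation of composition and of identities. These are immediate: the equalities
\[
F_{k}(\lambda_{\mathcal{Q}'',\mathcal{Q}'}\circ\lambda_{\mathcal{Q}',\mathcal{Q}})=F_{k}(\lambda_{\mathcal{Q}'',\mathcal{Q}'})\circ F_{k}(\lambda_{\mathcal{Q}',\mathcal{Q}}),\qquad F_{k}(id_{\mathcal{Q}})=id_{\mathcal{Q}}
\]
are inherited from the corresponding identities for the unrestricted $F$ proved in Theorem \ref{functor}, since the morphisms in the subcategories are, by the preceding filtration lemma, precisely the restrictions of $\lambda_{\mathcal{Q},\mathcal{Q}'}$ (resp.\ $\nu_{\mathcal{Q},\mathcal{Q}'}$) and hence compose inside each subcategory exactly as they do in the ambient category.

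I do not foresee a genuine obstacle here: the corollary is essentially the statement that the construction of Theorem \ref{functor} is stable under specialization of coefficients, and that the filtration of the target categories is compatible with the functorial assignment $\mathcal{Q}\mapsto lin(\mathcal{Q},z,*)$ (respectively $\mathcal{Q}\mapsto eq(\mathcal{Q},z,*)$). The only point worth stating explicitly is the well-posedness of the restriction of the target morphisms to each subcategory, which is exactly what the preceding lemma already guarantees.
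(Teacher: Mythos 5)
Your proposal is correct and follows essentially the same route the paper intends: the corollary is left without an explicit proof precisely because, as you observe, each subcategory in the filtrations is of the same form as the ambient one (via the zero-coefficient identification), so the verification of covariance from Theorem \ref{functor} together with the closure of the restricted morphisms guaranteed by the preceding lemma carries over verbatim. No gap.
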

\section{Main theorems}

In this section, we present the main results of the paper. Theorem \ref{main1} offers a solution to the problem of the integrability preserving discretization of a linear $n$-th order ODE with variable coefficients of polynomial type. Theorem \ref{main2} solves the analogous problem for the case of a nonlinear first-order ODE.


\subsection{Integrable maps from linear ODEs}
\begin{theorem} \label{main1}
\noindent Consider the differential equation
\begin{equation}
lin(\partial, z):=a_{N}(t) \frac{d^N}{dt^N}z + a_{N-1}(t) \frac{d^{N-1}}{dt^{N-1}}z  +\ldots+a_{1}(t) \frac{d}{dt} z+a_{0}(t) z +c_{0}(t)=0, \label{lincont2}
\end{equation}
where $z\in\mathcal{C}^{\infty}(\mathbb{R}_{+} \cup \{0\},\mathbb{R})$, and suppose that $a_{j}(t)= \sum_{m_j=0}^{M_{j}} \alpha_{j m_j} t^{m_{j}}$, $c_{0}(t)= \sum_{r=0}^{R} \gamma_{r}t^{r}$, with $\alpha_{j m_j}, \gamma_{r} \in\mathbb{R}$, $M_{j}, R\in\mathbb{N}$, $j=0,\ldots,N$. Assume that
\beq
z(t)=\sum_{k=0}^{\infty} b_k t^k
\eeq
be a real solution of \eqref{lincont2} in the ring of formal power series in $t \in \mathbb{R}_{+} \cup \{0\}$. Let $k,j,m\in\mathbb{N}$, $k\geq j+m$ and
\beq
K(k,j,m,n):=(-1)^{k-j-m}\frac{1}{j!(k-m-j)!}\frac{n!}{(n-k)!}.
\eeq
\noi Then the equation
\begin{eqnarray}
\nn \sum_{l=0}^{N}\left[a_{l0} \Delta^{l} z_{n} + \sum_{m_l=1}^{M_l} \alpha_{l m_{l}}\sum_{k=0}^{n} \sideset{}{'}\sum_{j_l=0}^{k-m_l}K(k, j_l, m_l,n) \Delta^{l} z_{j_l}\right] + \sum_{r=0}^{R} \beta_{r}\frac{R!}{(R-r)!}=0 \\ \label{eqlin}
\end{eqnarray}
\noi that represents eq. \eqref{lincont2} on a regular lattice of points $\mathcal{L}$ indexed by the variable $n\in\mathbb{N}$, admits as a solution the series
\beq
z_{n}=\sum_{k=0}^{n} b_k \frac{n!}{(n-k)!}. \label{part1}
\eeq
Here we denote by $\sideset{}{'}\sum$ a sum ranging over all values of the indices $k_{i}$ such that the sum is not empty.
\end{theorem}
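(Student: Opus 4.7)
The plan is to apply the covariant functor $F$ of Theorem \ref{functor} to map the ODE \eqref{lincont2} to its abstract form $\sum_{l=0}^N a_l(t) *_{\mathcal{Q}} \mathcal{Q}^l z + c_0(t)=0$ in $\mathcal{K}_{\{a_0,\ldots,a_N,c_0\}}$, and then to realize this abstract equation in the Rota algebra $(\mathcal{F}_{\mathcal{L}}, \Delta)$ associated with the forward difference operator. Its basic sequence is the falling factorial $p_k(n) = (n)_k = n!/(n-k)!$, which satisfies $\Delta(n)_k = k(n)_{k-1}$ together with $(n)_k *_\Delta (n)_m = (n)_{k+m}$. The Rota morphism $t^k\mapsto (n)_k$ of \eqref{morphism} transports the continuous solution $z(t)=\sum b_k t^k$ to $\sum_{k\ge 0} b_k (n)_k$; because $(n)_k$ vanishes for $k>n$ when $n\in\mathbb{N}$, this collapses to the finite truncation \eqref{part1}.

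The key observation is that $\Delta$ and $d/dt$ each act as a \emph{derivation} on its own basic sequence, while the $*_\Delta$-product mimics polynomial multiplication in the sense that $(n)_k *_\Delta (n)_m = (n)_{k+m}$ parallels $t^k\cdot t^m = t^{k+m}$. Consequently, expanding $\sum_l a_l(t) *_\Delta \Delta^l z + c_0(t)$ as a formal series in $\{(n)_j\}$ produces coefficients identical, under the substitution $(n)_j\leftrightarrow t^j$, to the coefficients of $\sum_l a_l(t) z^{(l)}(t) + c_0(t)$ in $\{t^j\}$. The hypothesis that $z(t)$ satisfies \eqref{lincont2} forces every such coefficient to vanish, and therefore evaluating the formal identity at any $n\in\mathbb{N}$ yields zero.

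It remains to recognize that this pointwise evaluation is precisely \eqref{eqlin}. For each $l$, the $m_l=0$ part of $a_l(t) *_\Delta \Delta^l z$ contributes $a_{l0}\Delta^l z_n$ directly, since $1 *_\Delta f = f$. For $m_l\ge 1$, I would invoke the Newton-difference identity
\begin{equation*}
\sum_{j=0}^{k-m}\frac{(-1)^{k-m-j}}{j!\,(k-m-j)!}\,\Delta^{l} z_{j} \;=\; \frac{\Delta^{k-m+l} z_{0}}{(k-m)!},
\end{equation*}
which is the standard formula $\Delta^{p} f_0 = \sum_{j=0}^{p}(-1)^{p-j}\binom{p}{j} f_j$ applied to $f=\Delta^l z$ with $p=k-m$. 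Multiplying by $n!/(n-k)!$ yields exactly $\sum_j K(k,j,m,n)\,\Delta^l z_j$, and the identity $\Delta^p z_0 = p!\,b_p$, immediate from $z_n=\sum b_k (n)_k$, unfolds the $k$-sum into the falling-factorial expansion of $\alpha_{l m_l}(n)_{m_l} *_\Delta \Delta^l z$ evaluated at $n$. Finally, $\sum_r \gamma_r\, n!/(n-r)!$ is identified as the image under the Rota morphism of $c_0(t)=\sum_r \gamma_r t^r$.

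The principal technical obstacle is the combinatorial bookkeeping in this last step: one must simultaneously track the two index shifts introduced by $*_\Delta$ (by $+m_l$) and by $\Delta^l$ (by $-l$), and reorder the summations to bring the expression into the specific $(k,j_l,m_l,n)$-indexed form of \eqref{eqlin}. Once the Newton-difference identity above is in hand, the verification reduces to the functorial fact supplied by Theorem \ref{functor}: the continuous equation and its $\Delta$-realization admit the same coefficient expansion in their respective basic sequences, so that solvability transports canonically along the Rota morphism.
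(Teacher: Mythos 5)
Your proposal is correct and follows essentially the same route as the paper: both transport the equation and its Taylor solution along the Rota morphism $t^k \mapsto n!/(n-k)!$, using the derivation property of $\Delta$ on the falling factorials together with $p_k *_\Delta p_m = p_{k+m}$ to match coefficients in the respective basic sequences, and both obtain \eqref{part1} from the truncation $p_k(n)=0$ for $k>n$. Your Newton forward-difference identity is exactly the paper's inverse interpolating transform $\zeta_k=\sum_{j=0}^{k}(-1)^{k-j}z_j/\bigl(j!\,(k-j)!\bigr)$ applied to $\Delta^l z$, so the derivation of the kernel $K$ coincides with the paper's computation of the image of $t^{m}\,d^l z/dt^l$ under $\rho_{\partial,\Delta}$.
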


\begin{proof}
We introduce an auxiliary space of variables, defined in terms of the Fourier coefficients of the expansion of $z$ in terms of the basic polynomials $\{p_k\}_{k\in\mathbb{N}}$ for a given delta operator $\mathcal{Q}$. The transformation
\begin{equation}
z(t)= \sum_{k=0}^{\infty}\zeta_{k} p_{k} (t), \hspace{10mm}  \label{interpol}
\end{equation}
 is said to be a \textit{discrete interpolating transformation} with Fourier coefficients $\zeta_{k} \in\mathbb{R}$.

Therefore, we shall identify $z(t)$ with the unique associated sequence of its Fourier coefficients $\{\zeta_{n}\}_{n\in\mathbb{N}}$.
\noindent The discrete transform (\ref{interpol}) is finite whenever $t\in\mathcal{L}$, where the lattice $\mathcal{L}$ is chosen in such a  way that its points coincide with the zeros of the sequence of polynomials $\{p_{k}(t)\}_{k\in\mathbb{N}}$. Hereafter we will choose $\mathcal{L}$ to be an equally spaced lattice, indexed by $n\in\mathbb{N}$. In this case, the polynomials guaranteeing the finiteness of expansion \eqref{interpol} are lower factorial polynomials, which in turn are the basic polynomials for the operator $\Delta$. Indeed, we have
\begin{equation}
p_{k}(n)= \begin{cases} 0 \qquad\qquad\qquad if \quad n<k, \\
\frac{n!}{(n-k)!}\qquad\qquad if\quad n \geq k. \label{pol}
\end{cases}
\end{equation}
\noindent Let us introduce the function $z:\mathbb{N}\rightarrow \mathbb{R}$ defined by
\begin{equation}
z_n=  \sum_{l=0}^{n} \frac{n!}{(n-l)!}\zeta_l.
\end{equation}
It corresponds to the restriction on $\mathcal{L}$ of the expansion \eqref{interpol} of $z(t)$. For its \textit{inverse interpolating transform}, we have
\begin{equation}
\zeta_n= \sum_{l=0}^{n} (-1)^{n-l} \frac{1}{  l!(n-l)!}z_l. \label{invtr}
\end{equation}
For the purpose of constructing the equation equivalent to eq. \eqref{lincont2} in $\mathcal{K}_{\{ a_{0},\ldots, a_{N},c_{0}\}}$, we represent on the lattice $\mathcal{L}$ the product $t z(t)$ by means of the action of the morphism $\rho_{\partial,\Delta}: (\mathcal{F},+,\cdot)\rightarrow (\mathcal{F},+,*_{\Delta})$ between the standard Rota algebra of $\mathcal{C}^{\infty}$ functions with the pointwise product and that associated with the forward difference operator. We get the contribution
\beqa
\nn \sum_{k=0}^{\infty}b_{k}t^{k+1}&\longrightarrow&  \sum_{k=1}^{n} \frac{n!}{(n-k-1)!}\zeta_k=\sum_{k=0}^{n} \sum_{j=0}^{k} (-1)^{k-j} \frac{1}{  j!(k-j)!}\frac{n!}{(n-k-1)!}z_j
\\&=&\sum_{k=0}^{n} \sum_{j=0}^{k-1} (-1)^{k-j-1} \frac{1}{  j!(k-j-1)!}\frac{n!}{(n-k)!}z_j.
\eeqa
At the same time, $\frac{d^p}{dt^p}z \longrightarrow \Delta^{p}z$, with $p\in\mathbb{N}/\{0\}$. This implies that
\beq
\nn t \frac{dz}{dt} \overset{\rho_{\partial,\Delta}}{\longrightarrow} \sum_{k=0}^{n} \sum_{j=0}^{k-1} (-1)^{k-j-1} \frac{1}{  j!(k-j-1)!}\frac{n!}{(n-k)!}(z_{j+1}-z_{j}).
\eeq
Similar expressions hold for higher-order derivatives. By iterating the previous reasoning to all contributions of the polynomial form $\sum_{m_j=0}^{M_{j}} \alpha_{j m_j} t^{m_{j}}\frac{d^p}{dt^p}z$, we prove that eq. \eqref{eqlin} is categorically equivalent to \eqref{lincont2}. Both in turn are representations of the abstract equation \eqref{linabstr}.

To conclude the proof, observe that, by means of the action of the functor $F$, any $\mathcal{C}^\infty$ solution (and in particular analytic) $z$ of eq. \eqref{lincont2} is carried into a solution $x_n$ of eq. \[lin(Q,z,*_{Q})=\lambda_{\partial,\Delta}(lin(\partial, z,\cdot)).\]
In addition, on the lattice $\mathcal{L}$, the sum $\sum_{k} b_k p_{k}(n)$ truncates and converts into the finite sum \eqref{part1}. \end{proof}
As a consequence of this proof,  eqs. \eqref{lincont2} and \eqref{eqlin} are categorically equivalent.
\begin{remark}\label{rem10}
Theorem \ref{main1} can be considerably generalized by choosing an arbitrary different delta operator $Q$ and the corresponding basic sequence $\{q_{n}(x)\}_{n\in\mathbb{N}}$. In this case, we define the lattice $\mathcal{L}$ to be the union set of all zeroes of the polynomials of the sequence $\{q_{n}(x)\}_{n\in\mathbb{N}}$. The action of the morphism $\lambda_{\partial,Q}$ will provide a different representation of eq. \eqref{lincont2}, categorically equivalent to it in $\mathcal{K}_{\{ a_{0},\ldots, a_{N},b_{0}\}}$.
\end{remark}

\subsection{Integrable maps from nonlinear ODEs}

\begin{theorem} \label{main2}
\noindent Consider a dynamical system of the form
\begin{equation}
eq(\partial, z):=\frac{d^m}{dt^m}z- a_{N}(t) z^{N}-a_{N-1}(t) z^{N-1}-\ldots+a_{1}(t) z- a_{0}(t)=0, \label{ncont2}
\end{equation}
where $z\in\mathcal{C}^{\infty}(\mathbb{R}_{+} \cup \{0\},\mathbb{R})$, $m\in\mathbb{N}$, and suppose that $a_{j}(t)= \sum_{m_j=0}^{M_{j}}\alpha_{j m_j} t^{m_{j}}$,  with $\alpha_{j m_j}\in\mathbb{R}$,  $j=0,\ldots,N$.
Assume that
\beq
z(t)=\sum_{k=0}^{\infty} b_k t^k
\eeq
be a real solution of \eqref{ncont2} in the ring of formal power series in $t \in \mathbb{R}_{+} \cup \{0\}$. Let
\beq
M(k_1,\ldots,k_{j},n):=\frac{(-1)^{k_1+\cdots + k_{j}+n}}{k_1!\ldots k_{j}!}\frac{ (j-1)^{n-k_{1}-k_{2}-\cdots-k_{j}}}{(n-k_1-k_2-\ldots-k_{j})!}
\eeq
Then the difference equation

\begin{eqnarray}
\nn \Delta^{m}z_{n}&=&z_{n+m}-\binom{m}{1}z_{m+m-1}+\binom{m}{2}z_{n+m-2}+\ldots+ z_{n} \\ \nn  &=&n!\sum_{j=1}^{N} \left[  \sum_{m_j=0}^{M_{j}} \alpha_{j m_j} t^{m_{j}} \sideset{}{'}\sum_{k_1,\ldots,k_{j}=0}^{n} M(k_1,\ldots,k_{j},n-m_{j}) z_{k_1}z_{k_2}\cdots z_{k_j}\right]  + \sum_{r=0}^{R} \gamma_{r}\frac{n!}{(n-r)!}, \\ \label{eqnonlin}
\end{eqnarray}
that represents eq. \eqref{ncont2} on a regular lattice of points $\mathcal{L}$ indexed by the variable $n\in\mathbb{N}$, admits as a solution the series
\beq
z_{n}=\sum_{k=0}^{n} b_k \frac{n!}{(n-k)!}. \label{part}
\eeq
Here we denote by $\sideset{}{'}\sum$ a sum ranging over all values of the indices $k_{i}$ such that $\sum_{i} k_{i}\leq n$.
\end{theorem}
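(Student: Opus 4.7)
The plan is to mirror the structure of Theorem \ref{main1}, exploiting the functor $G$ to pass between a continuous representation of the abstract object
\[
eq(\mathcal{Q}, z, *_{\mathcal{Q}}) := \mathcal{Q}^{m} z - \sum_{j=0}^{N} a_{j}(t) *_{\mathcal{Q}} z^{*j} = 0
\]
in $\mathcal{N}_{\{m; a_{0},\ldots,a_{N}\}}$ and its realization for $\mathcal{Q} = \Delta$. I choose the basic sequence to be the lower factorials $p_{k}(t)=t(t-1)\cdots(t-k+1)$, and use the discrete interpolating transform $z(t) = \sum_{k} \zeta_{k}\, p_{k}(t)$ together with its inverse \eqref{invtr}, exactly as in Theorem \ref{main1}. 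The truncation $\sum_{k=0}^{n}$ in \eqref{part} is then automatic from $p_{k}(n)=0$ for $k>n$.

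The new ingredient relative to Theorem \ref{main1} is the evaluation of the nonlinear powers $z^{*j}$. Since the product satisfies $p_{k_{1}} * \cdots * p_{k_{j}} = p_{k_{1}+\cdots+k_{j}}$ by \eqref{starproduct}, expanding $z=\sum\zeta_{k} p_{k}$ gives on the lattice
\[
(z^{*j})(n) \;=\; \sum_{k_{1},\ldots,k_{j}\geq 0,\; K\leq n} \zeta_{k_{1}}\cdots\zeta_{k_{j}}\, \frac{n!}{(n-K)!}, \qquad K:=k_{1}+\cdots+k_{j}.
\]
Substituting \eqref{invtr} for each $\zeta_{k_{i}}$, reordering to isolate the product $z_{l_{1}}\cdots z_{l_{j}}$, and setting $s_{i}:=k_{i}-l_{i}$ reduces the inner $j$-fold sum at fixed $\sigma := s_{1}+\cdots+s_{j}$ to
\[
\sum_{s_{1}+\cdots+s_{j}=\sigma}\prod_{i}\frac{(-1)^{s_{i}}}{s_{i}!} \;=\; \frac{(-j)^{\sigma}}{\sigma!}
\]
by the multinomial theorem; the outer sum over $\sigma$ then collapses via the binomial theorem to $(1-j)^{n-L}/(n-L)!$ with $L:=l_{1}+\cdots+l_{j}$. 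Writing $(1-j)^{n-L}=(-1)^{n-L}(j-1)^{n-L}$ and folding the sign into $(-1)^{n+L}$ recovers precisely the kernel $M(l_{1},\ldots,l_{j},n)$ claimed in the statement.

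To assemble \eqref{eqnonlin} there remain three routine steps. First, the polynomial coefficient $a_{j}(t)=\sum_{m_{j}}\alpha_{j m_{j}} t^{m_{j}}$ multiplied under $*_{\Delta}$ against $z^{*j}$ amounts to $p_{m_{j}} *_{\Delta} z^{*j}$, and repeating the previous calculation with $(n-K)!$ replaced by $(n-m_{j}-K)!$ simply substitutes $n-m_{j}$ for $n$ in the argument of $M$, producing the factor $M(k_{1},\ldots,k_{j}, n-m_{j})$. Second, the free term $a_{0}(t)=\sum_{r}\gamma_{r} t^{r}$ is transported by the Rota morphism $\rho_{\partial,\Delta}$ to $\sum_{r}\gamma_{r}\, p_{r}(n) = \sum_{r}\gamma_{r}\, n!/(n-r)!$. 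Third, $\mathcal{Q}^{m}=\Delta^{m}$ expands via the standard binomial identity into the combination $z_{n+m}-\binom{m}{1}z_{n+m-1}+\cdots$ displayed on the left-hand side of \eqref{eqnonlin}. This establishes the categorical equivalence of \eqref{ncont2} and \eqref{eqnonlin} in $\mathcal{N}_{\{m; a_{0},\ldots,a_{N}\}}$; applying the functor $G$ of Theorem \ref{functor} to the continuous solution $\sum b_{k} t^{k}$ then yields the discrete solution $\sum_{k} b_{k}\, p_{k}(n)$, which is exactly the finite sum \eqref{part}.

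The main obstacle I anticipate is the combinatorial bookkeeping in the second paragraph: tracking the signs $(-1)^{k_{i}-l_{i}}$ through the nested finite sums, justifying the interchange of summations (automatic here because every sum truncates on the lattice), and correctly applying the multinomial and binomial identities so that the nontrivial factor $(j-1)^{n-L}$ emerges with the correct sign. Once the kernel $M$ has been identified, the rest of the argument is a direct transcription of the linear case.
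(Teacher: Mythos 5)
Your proof is correct and follows essentially the same route as the paper's: expand $z$ in the lower--factorial basic sequence, substitute the inverse interpolating transform \eqref{invtr} into $z^{*j}_n=\sum\zeta_{k_1}\cdots\zeta_{k_j}\,p_{k_1+\cdots+k_j}(n)$, evaluate the resulting kernel, shift $n\mapsto n-m_j$ to absorb the polynomial coefficients $t^{*m_j}$, and conclude via the functor $G$ and the truncation $p_k(n)=0$ for $k>n$. The only difference is presentational: you collapse the kernel in one stroke with the multinomial theorem (giving $(-j)^{\sigma}/\sigma!$) followed by the binomial theorem, whereas the paper sums out the auxiliary indices one at a time, generating the factors $2,3,\ldots$ iteratively before arriving at the same $(j-1)^{n-k_1-\cdots-k_j}$ in \eqref{39}.
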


\begin{proof}

Let us denote by $z:\mathbb{N}\rightarrow \mathbb{R}$ the restriction of $z$ on $\mathcal{L}$. First, we need  to compute explicitly the product $z^{*p}_{n}:=(\underbrace{z*\ldots *z}_{p-times})(n)$.

\noi We get
\beqa
\nn z^{*p}_{n}&=&\sum_{l_1,l_2,\ldots l_{p}=0}^{\infty} \zeta_{l_1}\zeta_{l_2}\cdots \zeta_{l_{p}} P_{l_1+l_2+\ldots +l_{p}}(n)\\ \nn &=&
\sum_{l_1,\ldots l_p=0}^{n}\ \sum_{k_1=0}^{l_1}  \ldots\sum_{k_p=0}^{l_p}(-1)^{l_1-k_1+l_2-k_2+\ldots+l_p-k_p}[ \frac{z_{k_1}z_{k_2}\cdots z_{k_p}}{k_1!(l_1-k_1)!k_2!(l_2-k_2)! \cdots k_p! (l_p-k_p)!} \\ \nn
&\cdot& \frac{n!}{(n-l_1-\ldots-l_p)!}]=\sum_{k_1,\ldots k_p=0}^{n} \frac{(-1)^{k_1+\ldots + k_p}}{k_1!\cdots k_p!}z_{k_1}z_{k_2}\cdots z_{k_p} \mathrm{K}_{n,{k_1},\ldots,{k_p}},
\eeqa
\noindent where we have introduced the kernel
\beqa
\nn \mathrm{K}_{n,{k_1},\ldots,{k_p}}:= \sideset{}{'}\sum_{l_1, \ldots,l_p=0}^{n}(-1)^{l_1+\ldots+l_p}\frac{1}{(l_1-k_1)!} \cdots \frac{1}{(l_p-k_p)!}\frac{n!}{(n-l_1-\ldots -l_p)!}. \\ \label{kernel}
\eeqa
This expression can be rewritten as
\beq
\mathrm{K}_{n,{k_1},\ldots,{k_p}}=\sum_{l=0}^{n}\frac{(-1)^{l}n!}{(n-l)!}\sum_{l_{1}=k_{1}}^{l}\frac{1}{(l_1-k_1)!}\frac{2^{l-l_1-k_2-\ldots-k_p}}{(l-l_1-k_2-\ldots-k_p)!},
\eeq
where $l=l_1+\ldots+l_p$. After some algebraic manipulations, it reduces to
\beqa
\nn \mathrm{K}_{n,{k_1},\ldots,{k_p}}&=&\sum_{s_1=0}^{n-k_1}\ldots\sum_{s_{p-2}=0}^{n-k_{p-2}}\frac{1}{s_1!\ldots s_{p-2}!}\frac{(-1)^{n}n!}{\left(n-s_1-\ldots-s_{p-2}-k_{1}-\ldots-k_{p}\right)!}=\\ \nn
&=& (-1)^{n}n! \sum_{s_1=0}^{n-k_1}\frac{1}{s_1!}\ldots\sum_{s_{p-3}=0}^{n-k_{p-3}}\frac{1}{s_{p-3}!}\sum_{s_{p-2}=0}^{n-s_1-\ldots-s_{p-3}-k_{1}-\ldots-k_{p}}\frac{1}{s_{p-2}!}
\\ \nn &\cdot&\frac{1}{(n-s_1-\ldots-s_{p-2}-k_{1}-\ldots-k_{p})!}=
\\ &=&\nn (-1)^{n}n! \sum_{s_1=0}^{n-k_1}\frac{1}{s_1!}\ldots\sum_{s_{p-3}=0}^{n-k_{p-3}}\frac{1}{s_{p-3}!}\frac{2^{n-s_1-\ldots-s_{p-3}-k_{1}-\ldots-k_{p}}}{(n-s_1-\ldots-s_{p-3}-k_{1}-\ldots-k_{p})!}=
\\ \nn &=&(-1)^{n}n! \sum_{s_1=0}^{n-k_1}\frac{1}{s_1!}\ldots\sum_{s_{p-4}=0}^{n-k_{p-4}}\frac{1}{s_{p-4}!}\frac{3^{n-s_1-\ldots-s_{p-4}-k_{1}-\ldots-k_{p}}}{(n-s_1-\ldots-s_{p-4}-k_{1}-\ldots-k_{p})!}=\ldots
\eeqa
By iterating the summation procedure, we arrive at the final expression
\begin{equation}
\mathrm{K}_{n,{k_1},\ldots,{k_p}}= \frac{(-1)^{n}n! (p-1)^{n-k_1-\ldots-k_p}}{(n-k_1-\ldots-k_p)!}, \qquad n>k_{1}+\ldots+k_{p}. \label{39}
\end{equation}
We deduce
\beq
 z^{*p}_{n}=n!\sum_{k_1,\ldots k_p=0}^{n} \frac{(-1)^{k_1+\ldots + k_p+n}}{k_1!\cdots k_p!} \frac{ (p-1)^{n-k_1-\ldots-k_p}}{(n-k_1-\ldots-k_p)!} z_{k_1}z_{k_2}\cdots z_{k_p}, \qquad n>k_{1}+\ldots+k_{p}.
\eeq

It is not difficult to ascertain that the representation of the product $t^{*m_j}*z^{*p}$  has an expression very similar to eq. \eqref{39}, with the variable $n$ replaced by $n-m_{j}$.

By combining together all the previous results,
we obtain the proof that the difference equation \eqref{eqnonlin} is categorically equivalent to eq. \eqref{ncont2} on $\mathcal{L}$, i.e. the image of eq. \eqref{ncont2} under the action of the morphism $\nu_{\partial,\Delta^{+}}$ defined in $\mathcal{N}_{\{m; a_{0},\ldots, a_{N}\}}$. In turn, both are realizations of eq. \eqref{nonlinabstr}.

To prove that the series \eqref{part} is solution of eq. \eqref{eqnonlin}, observe that the morphism $\rho_{\partial,\Delta^{+}}$ provides the correspondence

\begin{equation}
\sum_{k} b_k t^k \longrightarrow \sum_{k} b_k p_{k}(n).
\end{equation}

The action of the functor $G$ will carry any $\mathcal{C}^\infty$ solution $z$ of eq. \eqref{ncont2},  defined on the algebra $(\mathcal{F}+,\cdot)$, into a $\mathcal{C}^\infty$   solution $z_n$ of the corresponding equation
\[eq(\Delta,z,*_{\Delta})=\nu_{\partial,\Delta}(eq(\partial, z,\cdot)),\] defined on the Rota algebra $(\mathcal{F},+,*_{\Delta})$. Once represented this equation on the lattice $\mathcal{L}$, the series expansion of the solution $z_n$ truncates and converts into the finite sum \eqref{part}.
\end{proof}


%

The same scheme of Remark \ref{rem10} can be applied to generalize Theorem \ref{main2} to arbitrary objects of $\mathcal{N}_{\{m; a_{0},\ldots, a_{N}\}}$.
\section{Integrable dynamics on the Fourier space}
The integrable maps obtained by means of the previous construction define an auxiliary dynamics in the space of their Fourier coefficients, that is of independent interest. In this section, we shall focus on the case of nonlinear equations with constant coefficients, which provides a neat example of this alternative construction.
\begin{proposition}
\noindent Consider a dynamical system of the form
\begin{equation}
\frac{d^m}{dt^m}z= a_{N} z^{N}+a_{N-1} z^{N-1}+\ldots+a_{1} z+ b_{0}, \label{ncont3}
\end{equation}
where $z\in\mathcal{C}^{\infty}(\mathbb{R}_{+} \cup \{0\},\mathbb{R})$, $m\in\mathbb{N}$, and $a_{1},\ldots,a_{N}, b_{0} \in\mathbb{R}$. Assume that
\beq
z=\sum_{k=0}^{\infty} b_k t^k
\eeq
be a real solution of \eqref{ncont3} in the ring of formal power series in $t \in \mathbb{R}_{+} \cup \{0\}$.
Then the map
\beqa
\nn \frac{(n+m)!}{n!}\zeta_{n+m}&=&a_{N}\sum_{\overset{l_1,\ldots,l_{N-1}=0}{l_1+\ldots + l_{N-1}\leq n}}\zeta_{l_1}  \cdots \zeta_{l_{N-2}} \zeta_{l-l_{1}-\ldots-l_{N-1}} +\ldots \\
&+&a_{2} \sum_{l_1=0}^{n}\zeta_{l_1}  \zeta_{n-l_{1}}+a_{1} \zeta_{n} +b_{0}
\eeqa
possesses the solution
\beq
\zeta_{n}=\sum_{l=0}^{n}\sum_{k=0}^{l}\frac{(-1)^{n-l}b_{k}}{(n-l)!(l-k)!}.
\eeq
\end{proposition}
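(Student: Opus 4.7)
The plan is to pass from the lattice representation used in Theorem \ref{main2} to the Fourier-coefficient representation via the inverse interpolating transform \eqref{invtr}. Since the coefficients in \eqref{ncont3} are constant, the abstract equation lives in $\mathcal{N}_{\{m;a_{0},\ldots,a_{N}\}}$ with no $t^{m_j}$ factors to transport, so the Fourier picture simplifies dramatically: the $*_{\Delta}$ product becomes literal convolution of coefficient sequences, and $\Delta$ becomes a shift combined with a factorial weight. I would set this up first, and only then extract the map and check the solution.

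First I would write $z(t)=\sum_{k\ge 0}\zeta_{k}\,p_{k}(t)$ with the basic sequence $p_{k}(t)=t(t-1)\cdots(t-k+1)$ for $\Delta$, and note the two elementary facts on which everything rests: the derivation property $\Delta p_{k}=k\,p_{k-1}$, which iterates to $\Delta^{m}p_{k}=\frac{k!}{(k-m)!}p_{k-m}$, and the convolution rule $p_{l_{1}}*\cdots *p_{l_{p}}=p_{l_{1}+\cdots +l_{p}}$ from \eqref{starproduct}. The first gives
\[
\Delta^{m}z \;=\; \sum_{n\ge 0}\frac{(n+m)!}{n!}\,\zeta_{n+m}\,p_{n}(t),
\]
and the second gives
\[
z^{*p} \;=\; \sum_{n\ge 0}\Bigl(\,\sum_{l_{1}+\cdots +l_{p}=n}\zeta_{l_{1}}\cdots \zeta_{l_{p}}\Bigr)\,p_{n}(t).
\]

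Next, I would substitute these two expansions into the abstract equation $eq(\Delta,z,*_{\Delta})=0$ produced by the functor $G$ of Theorem \ref{functor} applied to \eqref{ncont3}, and equate the Fourier coefficient of $p_{n}$ on both sides. Because $\{p_{n}\}$ is a basis of $\mathcal{F}$, this yields exactly the coefficient-by-coefficient recursion displayed in the statement: the left side produces $\frac{(n+m)!}{n!}\zeta_{n+m}$, each nonlinear term $a_{p}z^{*p}$ produces the $(p-1)$-fold convolution $a_{p}\sum_{l_{1}+\cdots +l_{p}=n}\zeta_{l_{1}}\cdots\zeta_{l_{p}}$, the linear term $a_{1}z$ produces $a_{1}\zeta_{n}$, and the constant $b_{0}$ contributes only at $n=0$ (understood via the convention used for $p_{0}=1$).

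For the solution part, I would invoke Theorem \ref{main2} directly: the hypothesis that $z(t)=\sum_{k\ge 0}b_{k}t^{k}$ solves \eqref{ncont3} ensures, after truncation on the lattice $\mathcal{L}$, that $z_{n}=\sum_{k=0}^{n}b_{k}\frac{n!}{(n-k)!}$ solves the categorically equivalent difference equation. Applying the inverse interpolating transform \eqref{invtr} to this $z_{n}$ produces
\[
\zeta_{n}=\sum_{l=0}^{n}\frac{(-1)^{n-l}}{l!(n-l)!}\,z_{l}=\sum_{l=0}^{n}\sum_{k=0}^{l}\frac{(-1)^{n-l}\,b_{k}}{(n-l)!(l-k)!},
\]
which is exactly the claimed formula; the fact that it satisfies the Fourier-space map follows because the two representations are linked by the isomorphism \eqref{interpol}.

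The one place that requires care, and which I would treat as the main obstacle, is the bookkeeping in the nonlinear convolutions: the way the indices are written in the statement (with the last summand expressed as $\zeta_{l-l_{1}-\cdots -l_{N-1}}$ and an inequality $l_{1}+\cdots +l_{N-1}\le n$) must be reconciled with the symmetric convolution form $\sum_{l_{1}+\cdots +l_{p}=n}$ that comes out of $p_{l_{1}}*\cdots *p_{l_{p}}=p_{l_{1}+\cdots +l_{p}}$. This is a matter of reindexing (solving the last summation index in terms of the others) rather than genuine combinatorics, but it is the step where a careful reader must check that the two ways of enumerating ordered $p$-tuples with a fixed sum agree term by term.
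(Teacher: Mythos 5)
Your proposal is correct and follows essentially the same route as the paper: both arguments rest on the identities $\Delta^{m}p_{k}=\frac{k!}{(k-m)!}\,p_{k-m}$ and $p_{l_{1}}*\cdots*p_{l_{p}}=p_{l_{1}+\cdots+l_{p}}$, followed by equating coefficients of the basic sequence to read off the recursion, and on the inverse interpolating transform \eqref{invtr} applied to the lattice solution of Theorem \ref{main2} to obtain the closed form for $\zeta_{n}$. Your treatment is in fact somewhat more explicit than the paper's (which compresses the final combination of the two expansions into a single sentence), and your observations about the $b_{0}$ term and the convolution reindexing are sensible points of care rather than gaps.
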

\begin{proof}
As a consequence of the definition of basic sequence for $\{p_{k}\}_{k\in\mathbb{N}}$, one can easily prove that, for each $m\in\mathbb{N}/\{0\}$
\beq
\Delta^{m} z_{n}=\sum_{l=0}^{n}\frac{n!}{(n-l)!} \frac{(l+m)!}{l!}\zeta_{l+m}.
\eeq
Also,
\beq
z_n^{*p}=\sum_{l_1,\ldots,l_p=0}^{\infty}\zeta_{l_1}\ldots\zeta_{l_p}p_{(l_{1}+\ldots+l_{p})}(n)=\sum_{l=0}^{n}\frac{n!}{(n-l)!}\cdot\sum_{\overset{l_1,\ldots,l_{p-1}=0}{l_1+\ldots + l_{p-1}\leq l}}\zeta_{l_1}  \cdots \zeta_{l_{p-2}} \zeta_{l-l_{1}-\ldots-l_{p-1}}
\eeq
By combining the previous expressions, the thesis follows.
\end{proof}
The linear case, as well as that of dynamical systems with nonconstant coefficients can be treated in the same way. The study of these cases is left to the reader.
\section{Linear difference equations over Galois algebras and Picard-Vessiot theory}
The aim of this section is to extend some relevant results of Galois theory, established  for homogeneous linear differential equations with constant coefficients, to their \textit{alter ego} on $\mathcal{L}$ defined by Theorem \ref{main1} (See e.g. \cite{Kolchin} and \cite{PS2003} for definitions and results).

Let $\{a_{0},\ldots,a_{N-1}\}\in\mathbb{K}$. Let $\mathcal{B(\mathcal{F})}$ be the space of linear operators acting on the Rota differential algebra   $(\mathcal{F}+, \cdot, *_{\mathcal{Q}})$ over $\mathbb{K}$. We introduce the linear operator $T[Q]\in\mathcal{B(\mathcal{F})}$ defined by $T[Q]:= Q^{N}+a_{N-1} Q^{N-1}+\ldots +a_{1} Q+a_{0}$. We consider the linear differential equation
\beq
T[{\partial}](z):= y^{(N)}+a_{N-1} y^{(N-1)}+\ldots +a_{1}y'+a_{0} y =0,  \label{diff}
\eeq
and its categorically equivalent difference equation
\beq
T[{\Delta}](z):=\Delta^{N}z+a_{N-1} \Delta^{N-1}z+\ldots +a_{1} \Delta z +a_{0} z=0. \label{discr}
\eeq
It is trivial to show that the rings $C_{\partial}$ and $C_{\Delta}$ of constants for $\mathcal{R}(\mathcal{F}, \Delta)$ and $\mathcal{R}(\mathcal{F}, \Delta)$ coincide.
The following result holds.
\begin{lemma} \label{fundamental}
The morphism $\rho_{\partial,\Delta}: \mathcal{R}(\mathcal{F}, \mathcal{\partial}) \longrightarrow \mathcal{R}(\mathcal{F}, \Delta)$ maps isomorphically a fundamental system of solutions $\mathcal{S}$ of the linear differential equation \eqref{diff} into a fundamental system $\rho_{\partial, \Delta}(\mathcal{S})$ of the linear difference equation \eqref{discr}.
\end{lemma}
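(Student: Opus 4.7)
The plan is to assemble three ingredients: (i) $\rho_{\partial,\Delta}$ is $\mathbb{K}$-linear and sends solutions of \eqref{diff} to solutions of \eqref{discr}; (ii) it is injective on the solution space of \eqref{diff}; (iii) both solution spaces have the same finite dimension $N$ over the common ring of constants $C_{\partial}=C_{\Delta}$. With these three ingredients in hand, a linear map between two $N$-dimensional spaces that is injective is automatically bijective, and a fundamental system must be sent to a fundamental system.

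First I would specialise Theorem \ref{main1} to the constant-coefficient homogeneous case (all $a_{j}(t)=a_{j}\in\mathbb{K}$, $c_{0}\equiv 0$). It tells us that any formal power series solution $z(t)=\sum_{k\ge 0} b_{k}t^{k}$ of \eqref{diff} is carried by $\rho_{\partial,\Delta}$ to the sequence $z_{n}=\sum_{k=0}^{n}b_{k}\,n!/(n-k)!$, which solves \eqref{discr}. Because $\rho_{\partial,\Delta}$ merely reindexes a power series from the basic sequence $\{t^{k}\}$ of $\partial$ into the basic sequence of lower factorial polynomials associated with $\Delta$, it is $\mathbb{K}$-linear, and hence its restriction to the solution space of \eqref{diff} is a $\mathbb{K}$-linear map into the solution space of \eqref{discr}.

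Second I would verify injectivity directly from the inverse interpolating transform \eqref{invtr}. If $\rho_{\partial,\Delta}(z)=0$, that is $z_{n}=0$ for every $n\in\mathbb{N}$, then the Möbius-type inversion $b_{n}=\sum_{l=0}^{n}(-1)^{n-l}z_{l}/(l!(n-l)!)$ forces $b_{n}=0$ for all $n$, so $z\equiv 0$. Thus $\rho_{\partial,\Delta}$ is injective on the whole of $\mathcal{F}$, and by linearity it carries any $C_{\partial}$-linearly independent set of solutions of \eqref{diff} to a $C_{\Delta}$-linearly independent set of solutions of \eqref{discr}. Third, I would match dimensions. The solution space of \eqref{diff} has dimension $N$ over $C_{\partial}$ by the classical Picard--Vessiot theorem for linear ODEs with constant coefficients. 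For \eqref{discr}, expanding $\Delta^{N}=\sum_{k=0}^{N}\binom{N}{k}(-1)^{N-k}T^{k}$ shows that $z_{n+N}$ is determined by $z_{n},\ldots,z_{n+N-1}$ through a linear recurrence with coefficients in $\mathbb{K}$, so prescribing the $N$ initial values yields a unique solution and the solution space has dimension $N$ over $C_{\Delta}$. Since $C_{\partial}=C_{\Delta}$ is stated just above the lemma, an injective $\mathbb{K}$-linear map between two spaces of the same finite dimension $N$ is a bijection, and therefore $\rho_{\partial,\Delta}(\mathcal{S})$ is again a fundamental system.

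The step I expect to require the most care is the dimension count for \eqref{discr} over $C_{\Delta}$: one must check that the recursive solvability really produces an $N$-dimensional vector space of sequences (equivalently, that the Casoratian of the image $\rho_{\partial,\Delta}(\mathcal{S})$ is invertible in the discrete setting). Once this is secured, the linearity of $\rho_{\partial,\Delta}$ together with the Möbius inversion \eqref{invtr} reduces the remainder of the argument to a one-line piece of linear algebra over the common constant field.
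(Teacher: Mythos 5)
Your proposal is correct and follows essentially the same route as the paper's own proof: solutions of \eqref{diff} and \eqref{discr} correspond under $\rho_{\partial,\Delta}$ by Theorem \ref{main1}, the morphism preserves linear independence, and the two solution spaces have the same dimension over the common constants. You simply make explicit, via the inversion formula \eqref{invtr} and the order-$N$ recurrence obtained by expanding $\Delta^{N}$, the injectivity and the dimension count that the paper's proof only asserts.
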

\begin{proof}
As a consequence of Theorem \ref{main1}, $\mathcal{C}^{\infty}$ solutions of  eqs. \eqref{diff} and \eqref{discr} are in one-to one correspondence. The morphism $\rho_{\partial,\Delta}$ maps basic sequences into basic sequences, so it preserves linear independence of solutions and the dimension of the associated vector space. Then $\mathcal{S}'=\rho_{\partial, \Delta}(\mathcal{S})$  is a fundamental set for eq. \eqref{discr}.
\end{proof}
We wish to define a Picard--Vessiot extension for eq. \eqref{discr} by using the categorical approach developed before. To this aim, first we construct the \textit{universal solution algebra} $\mathcal{U}$ of eq. \eqref{discr}. Let
\beq
\Delta Y =AY,\qquad A\in\mathfrak{gl}_{N}(\mathcal{F}) \label{matrixdiff}
\eeq
be its matrix form, where it is understood that $Y_{i+1,j}=\Delta Y_{i,j}$. We introduce an $N\times N$ matrix of indeterminates $Y=(Y_{i,j})$ and define
\beq
\mathcal{U}[\Delta]:=\mathcal{F}\left[Y_{ij},\frac{1}{det(Y_{i,j})}\right], \quad 1\leq i,j\leq N.
\eeq
Here we introduce the modified Wronskian
\beq\label{wronsk}
det(Y_{i,j}):=
\begin{bmatrix}
Y_{11} &Y_{12}&\ldots& Y_{1N}\\
\Delta Y_{11}&\Delta Y_{12}&\ldots &\Delta Y_{1N}\\
\Delta^2 Y_{11}&\ldots& & \ldots\\
\vdots & & &\vdots \\
\Delta^{N-1}Y_{11}&\ldots& &\Delta^{N-1} Y_{1N}
\end{bmatrix} \ .
\eeq
The columns of $Y$ are formed by a fundamental system of solutions of the matrix equation $\Delta Y =AY$.
\begin{definition}
A Picard--Vessiot (PV) ring for the matrix equation \eqref{matrixdiff} over the Rota algebra $\mathcal{R}(\mathcal{F},\Delta)$ is a simple differential ring $\mathcal{O}$ such that

i) There exists a fundamental matrix $Z\in \text{GL}_{n}(R)$ for eq. \eqref{matrixdiff}.

ii) The ring $\mathcal{O}$  is generated by $\mathcal{F}$, the entries of $Z$ and $\frac{1}{det(Z)}$.
\end{definition}
To construct a PV-ring, we introduce the notion of \textit{difference ideal}.
\begin{definition}
A difference ideal is an ideal generated by the elements of the form
\beq
 \Delta^{N} X_{j} +a_{N-1} \Delta^{N-1} X_{j}+\ldots +a_{1} \Delta X_{j} +a_{0} X_{j}, \qquad 1\leq j\leq N,
\eeq
and by those obtained from them through the application of the operator $\Delta$.
\end{definition}
The following result holds.
\begin{proposition}
Let $\mathcal{I}$ be a maximal difference ideal of $\mathcal{U}[\Delta]$. Then the ring $\mathcal{V}[\Delta]:=\mathcal{U}[\Delta]/\mathcal{I}$ is a \textit{Picard-Vessiot ring} for the equation \eqref{matrixdiff}.
\end{proposition}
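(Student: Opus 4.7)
The plan is to follow the standard template for the construction of Picard--Vessiot rings (as in Kolchin and van der Put--Singer), adapted to the Rota algebra setting, and to verify the two defining conditions in order. The construction is in four steps.

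First, I would check that $\Delta$ extends to a delta operator on $\mathcal{U}[\Delta]$ that is a derivation with respect to the induced $*_{\Delta}$-product. On the indeterminates $Y_{ij}$ the extension is forced: for $1\le i\le N-1$ we set $\Delta Y_{ij}=Y_{i+1,j}$, and for $i=N$ we define $\Delta Y_{Nj}=-a_{N-1}Y_{Nj}-\cdots-a_{1}Y_{2j}-a_{0}Y_{1j}$, consistently with the fact that the $j$-th column of $Y$ is required to satisfy \eqref{matrixdiff}. The derivation property on $1/\det(Y_{i,j})$ is forced by $\Delta(f\cdot f^{-1})=0$, and extends uniquely to the polynomial ring by the Leibnitz rule \emph{ii)} of the Rota algebra axioms. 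This realizes $\mathcal{U}[\Delta]$ as an object of $\mathcal{R}(\mathcal{F},\Delta)$.

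Second, I would invoke Zorn's lemma to guarantee that a maximal difference ideal $\mathcal{I}\subset\mathcal{U}[\Delta]$ exists (the zero ideal is a difference ideal, and unions of chains of proper difference ideals are proper), and then show that $\mathcal{V}[\Delta]=\mathcal{U}[\Delta]/\mathcal{I}$ is \emph{simple} as a difference ring: any difference ideal of $\mathcal{V}[\Delta]$ pulls back to a difference ideal of $\mathcal{U}[\Delta]$ containing $\mathcal{I}$, hence equals $\mathcal{I}$ or the whole ring by maximality. Simplicity in the differential-ring sense is precisely the property required in the definition of a PV ring.

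Third, I would verify conditions \emph{i)} and \emph{ii)}. Let $Z$ denote the image of $Y$ in $\mathcal{V}[\Delta]$. Since $\det(Y_{i,j})$ is invertible in $\mathcal{U}[\Delta]$, its class $\det(Z)$ is invertible in $\mathcal{V}[\Delta]$; hence $Z\in\mathrm{GL}_{N}(\mathcal{V}[\Delta])$. By the construction of step one, each column of $Z$ satisfies the matrix equation \eqref{matrixdiff}. Thus $Z$ is a fundamental matrix, establishing \emph{i)}. For \emph{ii)}, observe that $\mathcal{U}[\Delta]$ is generated over $\mathcal{F}$ by the $Y_{ij}$ together with $1/\det(Y_{i,j})$, so the quotient $\mathcal{V}[\Delta]$ is generated over $\mathcal{F}$ by the entries of $Z$ and $1/\det(Z)$. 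Finally, I would appeal to Lemma \ref{fundamental} to link the object just produced to the PV theory of the original differential equation \eqref{diff}: the isomorphism of fundamental systems provided by $\rho_{\partial,\Delta}$ ensures that the modified Wronskian \eqref{wronsk} plays the role of the classical Wronskian, so the notion of fundamental matrix is well posed in $\mathcal{R}(\mathcal{F},\Delta)$.

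The main obstacle I anticipate is not the abstract scheme, which is formally parallel to the classical theory, but the verification that the $*_{\Delta}$-product is genuinely compatible with the extension of $\Delta$ to the polynomial ring $\mathcal{F}[Y_{ij},1/\det(Y_{i,j})]$. Because the $*$-product is non-local (as stressed in the introduction in connection with the KSS no-go theorem), one must check that the ``dummy'' indeterminates $Y_{ij}$ inherit a product structure for which the Leibnitz rule $\Delta(a*b)=\Delta(a)*b+a*\Delta(b)$ holds on the whole ring $\mathcal{U}[\Delta]$, and that passing to the quotient by $\mathcal{I}$ preserves this. Once this compatibility is secured, the simplicity and fundamental-matrix arguments above close the proof.
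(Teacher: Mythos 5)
Your proof is correct and follows the same route as the paper, which verifies the defining properties of a PV ring directly; the paper's own proof is just the one-line observation that the Wronskian \eqref{wronsk} remains invertible in $\mathcal{U}[\Delta]/\mathcal{I}$, leaving the simplicity of the quotient, the generation condition, and the extension of $\Delta$ to $\mathcal{U}[\Delta]$ entirely implicit. Your version supplies exactly those missing standard steps (Zorn's lemma, the pullback argument for simplicity, the verification of conditions \emph{i)} and \emph{ii)}), and the compatibility issue you flag at the end concerning the non-local $*_{\Delta}$-product on the adjoined indeterminates is a legitimate point that the paper does not address at all.
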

\begin{proof}
It suffices to observe that in $\mathcal{U}[\Delta]/\mathcal{I}$ the Wronskian \eqref{wronsk} is invertible.
\end{proof}
With an abuse of notation, we shall denote by $\mathcal{U}[\partial]$ the universal solution algebra of the matrix differential equation $Y'=AY$ associated to eq. \eqref{diff}, where $Y_{i+1,j}= Y'_{i,j}$.
We introduce the notion of a differential Galois group for the previous equations.
\begin{definition}
Given the Rota algebra $\mathcal{R}(\mathcal{F},\Delta)$, let $\mathcal{V}[\Delta]$ be a Picard-Vessiot ring over $\mathcal{R}$. The differential Galois group of $\mathcal{V}[\Delta]$ over $\mathcal{R}$, $\text{DGal}(\mathcal{V}/ \mathcal{R})$ is the group of the differential $\mathcal{R}$-isomorphisms.
\end{definition}
Let $\mathcal{M}$ denote a maximal differential ideal of eq. \eqref{diff}. The main result of this section is the following
\begin{theorem}
Let $\mathcal{W}[\partial]:=\mathcal{U}[\partial]/\mathcal{M}$ a PV-ring for the linear homogeneous differential equation \eqref{diff} and $\mathcal{V}[\Delta]=\mathcal{U}[\Delta]/\mathcal{I}$ a PV-ring for the linear homogeneous difference equation \eqref{discr}. Then there exists an isomorphism of groups $\Phi$ such that
\beq
\text{DGal}(\mathcal{W}/ \mathcal{R(\mathcal{F}, \partial)})\stackrel{\Phi}{\longleftrightarrow} \text{DGal}(\mathcal{U}[\Delta]/ \mathcal{R}(\mathcal{F}, \Delta)).
\eeq
\end{theorem}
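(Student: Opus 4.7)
The plan is to upgrade Lemma \ref{fundamental} from a bijection of fundamental solution systems to an isomorphism of Picard--Vessiot rings, and then read off $\Phi$ by transport of structure. First I would lift $\rho_{\partial,\Delta}$ to a $\mathbb{K}$-algebra map
\[
\tilde\rho: \mathcal{U}[\partial]\longrightarrow \mathcal{U}[\Delta]
\]
between the universal solution algebras. Since both are built on the same $N^{2}$ indeterminates $Y_{ij}$ over $\mathcal{F}$, localized respectively at the Wronskian and the modified Wronskian \eqref{wronsk}, I would take $\tilde\rho$ to be the unique extension of $\rho_{\partial,\Delta}$ that fixes each $Y_{ij}$ and intertwines $\partial$ with $\Delta$. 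Bijectivity of $\rho_{\partial,\Delta}$ on $\mathcal{F}$ and the fact that the Wronskian is sent to the modified Wronskian make $\tilde\rho$ an isomorphism of $\mathbb{K}$-algebras compatible with the two operators.

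Next, I would show that $\tilde\rho$ carries the maximal differential ideal $\mathcal{M}\subset\mathcal{U}[\partial]$ onto the maximal difference ideal $\mathcal{I}\subset\mathcal{U}[\Delta]$. By construction, the generators $\partial^{N}Y_{1j}+\sum_{k=0}^{N-1}a_{k}\partial^{k}Y_{1j}$ of $\mathcal{M}$ are sent to the corresponding $\Delta$-expressions generating $\mathcal{I}$, and closure under $\partial$ (respectively $\Delta$) is respected by the intertwining; maximality then transfers because $\tilde\rho$ is bijective and respects both ambient structures. Hence $\tilde\rho$ descends to a ring isomorphism
\[
\bar\rho: \mathcal{W}[\partial]=\mathcal{U}[\partial]/\mathcal{M} \xrightarrow{\ \sim\ } \mathcal{V}[\Delta]=\mathcal{U}[\Delta]/\mathcal{I},
\]
which restricts to $\rho_{\partial,\Delta}$ on $\mathcal{F}$ and fixes the common ring of constants $C_{\partial}=C_{\Delta}$; by Lemma \ref{fundamental}, $\bar\rho$ also maps a fundamental solution matrix of \eqref{diff} onto one of \eqref{discr}.

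Finally, I would define $\Phi(\sigma):=\bar\rho\circ\sigma\circ\bar\rho^{-1}$ for $\sigma\in\text{DGal}(\mathcal{W}/\mathcal{R}(\mathcal{F},\partial))$. That $\Phi(\sigma)$ is an $\mathcal{R}(\mathcal{F},\Delta)$-automorphism of $\mathcal{V}[\Delta]$ commuting with $\Delta$ follows from the intertwining properties of $\bar\rho$; $\Phi$ is a group homomorphism by associativity of composition, and $\tau\mapsto\bar\rho^{-1}\tau\bar\rho$ is the obvious inverse. The main obstacle I anticipate is the middle step, namely verifying that $\tilde\rho$ maps $\mathcal{M}$ onto $\mathcal{I}$ and that maximality survives. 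The subtlety is that the $\partial$-closure in $\mathcal{U}[\partial]$ must correspond term by term under $\tilde\rho$ to the $\Delta$-closure in $\mathcal{U}[\Delta]$, and the interplay between $\rho_{\partial,\Delta}$, the pointwise product on the source, and the nonlocal $*_{\Delta}$ product on the target must be controlled carefully, since the generators involve products of iterated $\partial$- (respectively $\Delta$-) derivatives of the $Y_{ij}$ rather than just linear expressions in them. Once the compatibility $\tilde\rho\circ\partial=\Delta\circ\tilde\rho$ has been checked on arbitrary polynomials in the generators, the remainder of the argument is purely formal.
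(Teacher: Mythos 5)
Your proposal follows essentially the same route as the paper's own proof: Lemma \ref{fundamental} gives the correspondence of fundamental systems, hence an isomorphism of the universal solution algebras carrying the maximal differential ideal $\mathcal{M}$ to the maximal difference ideal $\mathcal{I}$, an induced isomorphism of Picard--Vessiot rings, and finally the group isomorphism $\Phi$ by conjugation. In fact you are more explicit than the paper on the two delicate points (the descent of $\tilde\rho$ through the ideals and the definition of $\Phi$ as $\sigma\mapsto\bar\rho\circ\sigma\circ\bar\rho^{-1}$), which the paper's proof only asserts; the subtlety you flag about reconciling the pointwise product with the nonlocal $*_{\Delta}$ product is real but is likewise left implicit in the original argument.
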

\begin{proof}
According to Lemma \ref{fundamental}, given a fundamental set $\mathcal{S}$ of solutions of eq. \eqref{diff}, $\mathcal{S}'=\rho_{\partial, \Delta}(\mathcal{S})$ is a fundamental set of solutions of eq. \eqref{discr}. This implies that the universal solution algebras $\mathcal{U}[\partial]$ and $\mathcal{U}[\Delta]$ are isomorphic.  The morphism $\lambda_{\partial, \Delta}$ maps a (maximal) differential ideal $\mathcal{M}$ into a (maximal) difference ideal $\mathcal{I}$, so that the associated Picard-Vessiot rings are isomorphic. Then the differential isomorphisms of the Galois groups associated with eqs. \eqref{diff} and \eqref{discr} are in one-to one correspondence.
\end{proof}
\begin{remark}
A completely analogous result is valid for the case of the difference equation $T[\Delta^{-}](z)=0$, where $\Delta^{-}:=1-T^{-1}$. However, the previous construction does not necessarily hold for a generic delta operator $Q$. Indeed, the fundamental set of eq. $T[Q](z)=0$ generates a linear space whose dimension is in general greater than $N$.
\end{remark}

\section{Nonlocal Lie symmetries and integrable maps: some conjectures}

The categorical discretization proposed in Theorems \ref{main1} and \ref{main2} allows to map isomorphically $\mathcal{C}$$^\infty$ solutions of differential equations into solutions of difference equations. Consequently, we hypothesize the existence of a correspondence among the Lie symmetry groups and algebras admitted by the continuous and discrete models respectively. For an account of the modern theory of Lie symmetries, see \cite{olver}.

The first conjecture we formulate concerns the existence of symmetry transformations.
\begin{conjecture}
The integrable map \eqref{eqlin}  (resp. \eqref{eqnonlin}) admits a Lie group $\mathcal{G}$ of nonlocal diffeomorphisms that leave the map invariant and transform solutions into solutions.
\end{conjecture}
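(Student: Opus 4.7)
The plan is to exploit the functorial correspondence established by $F$ and $G$ (Theorem \ref{functor}) in order to transport the classical Lie symmetry structure of the continuous ODEs \eqref{lincont2} and \eqref{ncont2} onto their discrete counterparts \eqref{eqlin} and \eqref{eqnonlin}. By the standard theory of Lie point symmetries \cite{olver}, under mild non-degeneracy assumptions each such ODE admits a nontrivial Lie group $G$ of point (or contact) transformations, generated by vector fields $X = \xi(t,z)\partial_t + \eta(t,z)\partial_z$ whose prolongation annihilates the equation on its solution manifold. I would define $\mathcal{G}$ as the image of $G$ under the morphism $\lambda_{\partial,\Delta}$ (resp.\ $\nu_{\partial,\Delta}$), and realize it concretely in the lattice variables $z_n$ by passing through the Fourier coefficients.

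The key steps, in order, would be the following. First, for a one-parameter subgroup $\{g_\epsilon\} \subset G$ and a $\mathcal{C}^\infty$ solution $z(t) = \sum_k b_k t^k$ of the continuous equation, compute the transformed solution $\widetilde{z}(t,\epsilon) = (g_\epsilon \cdot z)(t) = \sum_k b_k(\epsilon)\, t^k$; this produces an explicit action $b_k \mapsto b_k(\epsilon)$ on Fourier coefficients. Second, by Theorem \ref{main1} (resp.\ \ref{main2}), the associated discrete solutions are $z_n = \sum_{k=0}^{n} b_k p_k(n)$ and $\widetilde{z}_n(\epsilon) = \sum_{k=0}^{n} b_k(\epsilon)\, p_k(n)$; combining this with the inverse interpolating transform \eqref{invtr} yields a closed formula $\widetilde{z}_n(\epsilon) = \mathcal{T}_\epsilon(z_0,\ldots,z_n)$, manifestly nonlocal. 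Third, derive the group law $\mathcal{T}_{\epsilon_1}\!\circ \mathcal{T}_{\epsilon_2} = \mathcal{T}_{\epsilon_1+\epsilon_2}$ from the corresponding identity for $g_\epsilon$ together with the bijectivity of \eqref{invtr} on finite truncations, and verify smoothness in $\epsilon$, thereby obtaining a Lie group of nonlocal diffeomorphisms of the sequence space. Fourth, check that $\mathcal{T}_\epsilon$ leaves the defining relation of \eqref{eqlin} (resp.\ \eqref{eqnonlin}) invariant, not merely the solution set.

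The main obstacle is expected to be the last step: proving \emph{off-shell} invariance of the discrete equation rather than the weaker fact that $\mathcal{T}_\epsilon$ permutes solutions, which is automatic from the functorial construction. For the linear family \eqref{eqlin} the argument should be straightforward, because any symmetry acts linearly on the solution space and extends uniquely to a linear operator on $\mathcal{F}$ commuting with $T[\Delta]$, very much in the spirit of Lemma \ref{fundamental}. For the nonlinear family \eqref{eqnonlin} the difficulty is genuine, since one must establish that $eq(\Delta, \mathcal{T}_\epsilon z, *_\Delta)$ vanishes as a formal identity in $(z_0, z_1, \ldots)$. A natural route is to lift $\mathcal{T}_\epsilon$ to an algebra endomorphism of the Galois differential algebra $(\mathcal{F}, +, \cdot, *_\Delta)$ and to exploit the fact that $\Delta$ is a derivation with respect to $*_\Delta$, so as to transfer the infinitesimal invariance condition $\mathrm{pr}^{(m)} X\bigl(eq(\partial,z,\cdot)\bigr) = 0$ from the continuous to the discrete side along the functor $G$; convergence issues for the resulting nonlocal series will have to be controlled truncation by truncation in $n$.
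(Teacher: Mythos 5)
The statement you are addressing is left as a \emph{conjecture} in the paper: the author explicitly hypothesizes the existence of the symmetry group $\mathcal{G}$ and offers no proof, so there is nothing on the paper's side to compare your argument with. Your text is in any case a programme rather than a proof, and it contains at least one unacknowledged gap that is fatal as written. In your second step you assert that conjugating the coefficient action $b_k \mapsto b_k(\epsilon)$ by the interpolating transform \eqref{interpol} and its inverse \eqref{invtr} yields a closed formula $\widetilde{z}_n(\epsilon) = \mathcal{T}_\epsilon(z_0,\ldots,z_n)$. But for a generic point symmetry the action on Taylor coefficients depends on the entire tail of the sequence: already for the time translation $t \mapsto t+\epsilon$ one has $b_j(\epsilon) = \sum_{k\ge j}\binom{k}{j}\epsilon^{k-j}b_k$, an infinite sum over $k \ge j$. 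Since the finite data $(z_0,\ldots,z_n)$ determine only $b_0,\ldots,b_n$ via \eqref{invtr}, the quantity $\widetilde{z}_n(\epsilon)$ is \emph{not} a function of $(z_0,\ldots,z_n)$, and indeed not even well defined on formal power series. The only way to make it well defined is to restrict to the solution manifold, where the higher coefficients are determined by the lower ones through the equation; but then $\mathcal{T}_\epsilon$ exists only on-shell, the statement ``transforms solutions into solutions'' collapses to what the functorial correspondence of Theorems \ref{main1} and \ref{main2} already gives for free, and the off-shell invariance of the map --- the actual content of the conjecture --- cannot even be formulated for your $\mathcal{T}_\epsilon$.

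Two further points. First, you correctly identify off-shell invariance as the main obstacle, but you leave it entirely open, which means the conjecture is not settled even granting the construction of $\mathcal{T}_\epsilon$. Second, your proposed shortcut in the linear case is incorrect: a linear operator defined on the $N$-dimensional solution space of $T[\Delta]z=0$ does not extend \emph{uniquely} to a linear operator on $\mathcal{F}$ commuting with $T[\Delta]$, and the point symmetry group of a linear ODE is in general strictly larger than the group acting linearly on solutions (for $y''=0$ it is eight-dimensional and includes projective transformations). A viable attack would have to either (i) restrict from the outset to the subgroup of fiber-preserving symmetries whose coefficient action is lower-triangular in $k$ (e.g.\ scalings $b_k \mapsto \lambda\mu^{-k}b_k$), for which $\mathcal{T}_\epsilon$ genuinely is a nonlocal map of $(z_0,\ldots,z_n)$, or (ii) work infinitesimally with prolonged vector fields on the algebra $(\mathcal{F},+,\cdot,*_\Delta)$ and transport the determining equations along the functor, accepting that the resulting discrete symmetries may be evolutionary/nonlocal rather than images of the continuous point symmetries.
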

Objects categorically equivalent usually do not possess isomorphic fundamental sets of solutions, with the exception of linear equations with constant coefficients, treated in Section 6. Consequently, the full Lie algebras generated by the symmetries postulated in the previous conjecture, in general are not isomorphic. The following conjecture proposes a weaker connection among these algebras.


\begin{conjecture}
The Lie algebra of the generators of the Lie group $\mathcal{G}$ of the nonlocal symmetry diffeomorphisms associated with the map \eqref{eqlin} (resp. \eqref{eqnonlin}) contains a subalgebra which is isomorphic to the Lie algebra of the classical Lie point symmetries of the continuous dynamical system \eqref{lincont} (resp.  \eqref{nonlincont}).
\end{conjecture}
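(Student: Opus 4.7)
The plan is to construct an explicit embedding $\iota\colon \mathfrak{g}_{\mathrm{cont}} \hookrightarrow \mathfrak{g}_{\mathrm{disc}}$ of the Lie algebra of classical point symmetries of \eqref{lincont} (resp. \eqref{nonlincont}) into the Lie algebra of the nonlocal symmetry group $\mathcal{G}$ postulated in the previous conjecture, and then to verify that $\iota$ is a Lie algebra homomorphism. The embedding is built by transporting the continuous infinitesimal action through the discrete interpolating transform \eqref{interpol}, together with its inverse \eqref{invtr}, in the same way that the functors $F$ and $G$ transport solutions.

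Concretely, given a classical infinitesimal point symmetry $X = \xi(t,z)\,\partial_t + \eta(t,z)\,\partial_z$ of \eqref{lincont2} (resp. \eqref{ncont2}), its prolongation integrates to a one-parameter group $\{\varphi_{\varepsilon}^{X}\}$ that maps $\mathcal{C}^\infty$-solutions to $\mathcal{C}^\infty$-solutions. By Theorems \ref{main1} and \ref{main2}, each such solution $z(t)=\sum_k b_k t^k$ is sent by $F$ (resp. $G$) to the sequence $z_n=\sum_{k=0}^{n} b_k\, n!/(n-k)!$. I would define $\iota(X)$ as the derivative at $\varepsilon=0$ of the induced one-parameter family of maps on sequences, expressed in the $\{z_n\}$-coordinates through \eqref{invtr}. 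By construction, the coefficients of $\iota(X)$ at the site $n$ depend on $z_{0},\dots,z_{n}$, so $\iota(X)$ is genuinely nonlocal in the sense of \eqref{nonlocal}.

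To complete the argument I would check two points: (i) $\iota(X)$ leaves \eqref{eqlin} (resp. \eqref{eqnonlin}) invariant; (ii) $\iota$ preserves the Lie bracket. Claim (i) follows from the covariance of $F$ and $G$ proved in Theorem \ref{functor}: since $\varphi_{\varepsilon}^{X}$ sends objects of $\mathcal{K}_{\{a_{0},\dots,a_{N},c_{0}\}}$ (resp. $\mathcal{N}_{\{m;a_{0},\dots,a_{N}\}}$) to themselves and the functor respects morphisms, the image acts within the same subcategory. Claim (ii) reduces to the observation that, at the level of Fourier coefficients $\{\zeta_k\}$, the interpolating transform and its inverse are mutually inverse linear bijections, so $X \mapsto \iota(X)$ is a conjugation and therefore a Lie algebra homomorphism. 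Injectivity of $\iota$ then follows from the injectivity of the functor on solutions established in the proofs of Theorems \ref{main1}--\ref{main2}.

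The hard part will be upgrading ``action on solutions'' to ``nonlocal diffeomorphism on the ambient space of sequences'' in which $\mathcal{G}$ is supposed to live. For a general $X$ with nonlinear $\xi,\eta$, the induced transformation on the coefficients $\{b_k\}$ is polynomial of Bell type, and one must show that its pushforward under \eqref{invtr} extends from the solution submanifold to a smooth flow on an open neighbourhood of it inside the space of admissible sequences. This requires a precise specification of the class of nonlocal maps constituting $\mathcal{G}$, a definition that the paper leaves implicit. I would attack this first in the constant-coefficient linear case of Section~6, where Lemma \ref{fundamental} already supplies a linear isomorphism of fundamental systems and hence an outright isomorphism of full symmetry algebras, and then propagate the result to the variable-coefficient and nonlinear cases by induction along the filtrations \eqref{filtrF}--\eqref{filtrG}, using categorical equivalence within each layer to control the extra terms produced by the polynomial coefficients $a_j(t)$.
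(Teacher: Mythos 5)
You should first be aware that the paper offers no proof of this statement: it is explicitly labelled a conjecture (and it rests on the preceding conjecture, which merely postulates the existence of the group $\mathcal{G}$), so there is no argument of the author's to compare yours against. Judged on its own, your proposal is a sensible programme but not a proof, and the gap you flag in your last paragraph is not a technicality --- it is the entire content of the conjecture. A point symmetry $X=\xi\,\partial_t+\eta\,\partial_z$ generates a flow on the $(t,z)$-space which induces a transformation of Taylor coefficients $\{b_k\}$ \emph{only on the solution set} (the flow maps graphs of solutions to graphs of solutions; applied to an arbitrary series it need not even return a graph, and when $\xi\neq 0$ it reparametrizes the independent variable, which has no counterpart on the fixed lattice index $n$). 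Consequently $\iota(X)$ is, as defined, only a vector field along the solution submanifold, and your claim (ii) --- that $\iota$ is a bracket-preserving conjugation because \eqref{interpol} and \eqref{invtr} are mutually inverse linear bijections --- presupposes exactly the extension to an ambient flow that is missing. Since the paper never specifies the class of nonlocal diffeomorphisms constituting $\mathcal{G}$, the target of the claimed embedding is itself undefined, and ``contains a subalgebra isomorphic to'' has no precise meaning until that is fixed.

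Two further steps would fail as written. Claim (i) invokes the covariance of $F$ and $G$ from Theorem \ref{functor}, but those functors act on objects and morphisms indexed by delta operators (i.e.\ they relate different discretizations of one equation); a symmetry flow $\varphi^X_\varepsilon$ is not a morphism of $\mathcal{K}_{\{a_0,\dots,a_N,c_0\}}$ or $\mathcal{N}_{\{m;a_0,\dots,a_N\}}$ in the paper's sense, so functoriality says nothing about invariance of \eqref{eqlin} or \eqref{eqnonlin} under $\iota(X)$; that would have to be checked by a direct infinitesimal computation on the nonlocal recurrence. And the proposed induction along the filtrations \eqref{filtrF}--\eqref{filtrG} has no inductive mechanism: those filtrations merely record which coefficients vanish, and enlarging the coefficient set generically \emph{shrinks} the symmetry algebra rather than perturbing it controllably. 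The one piece that is genuinely within reach is the constant-coefficient linear case, where Lemma \ref{fundamental} gives an isomorphism of solution spaces and hence transports the fibre-linear symmetries $s(t)\partial_z$ and $z\,\partial_z$; even there, the symmetries with $\xi\neq 0$ (time translations, scalings) already run into the extension problem above, so I would recommend reformulating the conjecture for that subalgebra first and proving it there before attempting the general case.
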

\appendix

\section{Some applications of the main theorems}

Here we propose some examples of integrable maps associated with ODEs relevant in the applications.

\subsection{A discrete classical Harmonic Oscillator}
The equation for the classical harmonic oscillator
\beq
y''(x)+\omega^2 y(x)=0
\eeq
possesses the simple discrete analog
\beq
z_{n+2}-2 z_{n+1}+(\omega^2 +1) z_{n}=0.
\eeq
It admits the two independent solutions
\beq
z^{(1)}:= \sum_{k=0}^{n} s_{k} \frac{n!}{(n-k)!}, \hspace{10mm} z^{(2)}:=\sum_{k=0}^{n} c_{k} \frac{n!}{(n-k)!}
\eeq
where
\beqa
\nn s_{k}&=&\frac{1}{k!}\frac{d}{dx^k}[\sin (x)]\mid_{x=0}, \\
c_{k}&=&\frac{1}{k!}\frac{d}{dx^k}[\cos (x)]\mid_{x=0}.
\eeqa
The damped harmonic oscillator
\beq
y''(x)+2 q \omega y'(x)+\omega^2 y(x)=0
\eeq
in the case $q\leq 1$ admits the general solution
\beq
y_{gen}(x)=A e^{-q \omega x}\sin \left( \sqrt{1-q^2} \omega x+ \phi\right).
\eeq
The \textit{discrete damped oscillator} reads
\beq
z_{n+2} +2(q \omega -1) z_{n+1} +(\omega^2-2q\omega+1)z_{n},
\eeq
with the general solution
\beq
z_{n}=\sum_{k=0}^{n}g(k)\frac{n!}{(n-k)!}, \hspace{10mm} g(k):=\frac{1}{k!}\frac{d}{dx^k}[y_{gen}]\mid_{x=0}.
\eeq
\subsection{A difference equation for the Gaussian function}
\noi As a first example, we shall consider the differential equation for the Gaussian function:
\beq
y'(x)=-x y(x).
\eeq
According to Theorem \ref{main1}, the associated integrable map reads
\beq
z_{n+1}-z_{n}+\sum_{k=0}^{n}\sum_{j=0}^{k-1}(-1)^{k-1-j}\frac{1}{j! (k-1-j)!}\frac{n!}{(n-k)!}z_{j}=0.
\eeq
It admits the solution
\beq
z_{n}=\sum_{k=0}^{n}g(k)\frac{n!}{(n-k)!},
\eeq
where $g(k)$ is the $k$-th coefficient of the Taylor expansion of the Gaussian function
\beq
e^{-\frac{x^2}{2}}=\sum_{k=0}^{\infty}g(k)\frac{x^k}{k!}.
\eeq
\subsection{Hypergeometric equation}

The real hypergeometric differential equation
\beq
t(1-t)\frac{d^2 z}{dt^2}+\left[c-(a+b+1)\omega\right]\frac{dz}{dt}-ab z=0 \label{hde}
\eeq
where $a,b,c\in\mathbb{R}$ admits three regular singular points $\{0,1, \infty\}$ \cite{AS}. By way of example, we shall restrict to the singularity at $t=0$. Around this point, eq. \eqref{hde} admits two algebraically independent solutions $s_1$ and $s_2$, so defined:

\noi a) if $c\notin \mathbb{N}$
\beq s_1:={}_{2} F_{1}(a,b;c;t);
\eeq
\noi b) if $c\notin \mathbb{Z}$
\beq
s_2:=\omega^{1-c}{}_{2} F_{1}(1+a-c,1+b-c;2-c;t),
\eeq
where ${}_{2} F_{1}(a,b;c;x)=\sum_{n=0}^{\infty}\frac{(a)_n (b)_n}{(c)_n}\frac{x^n}{n!}$. For different values of $c\in\mathbb{R}$, the solutions $s_1$ and $s_2$ should be replaced by more complicated expressions. Theorem \ref{main1} provides the following discrete version of eq. \eqref{hde}
\begin{eqnarray}
\nn & & \sum_{k=0}^{n}  \left(\sum_{j=0}^{k-1}(-1)^{k-1-j}\frac{1}{j!(k-1-j)!}- \sum_{j=0}^{k-2}(-1)^{k-2-j}\frac{1}{j!(k-2-j)!}\right)
\\ \nn &\times&\frac{n!}{(n-k)!}\left(z_{j+2}-2z_{j+1}+z_{j}\right) - a b z_{n}+\\ \nn
&+&\left[c-(a+b+1)\sum_{k=0}^{n} \sum_{j=0}^{k-1}(-1)^{k-1-j}\frac{1}{j!(k-1-j)!}\frac{n!}{(n-k)!}\right] \left(z_{j+1}-z_{j}\right)=0. \\ \label{hyperd}
\end{eqnarray}
If we introduce the finite Gauss sum
\beq
 G(n; a,b,c):=\sum_{k=0}^{n}\frac{(a)_k (b)_k}{(c)_k}\frac{n!}{(n-k)!},
\eeq
the two independent solutions of eq. \eqref{hyperd} are provided by
\beq
z^{(1)}_{n}:=G(n; a,b,c), \hspace{10mm} z^{(2)}_{n}:=\omega^{1-c}{}_{2} G(n;1+a-c,1+b-c;2-c).
\eeq

%
\subsection{A nonlinear case} The differential equation
\beq
z'(t)-t^{k} z(t)^2=0
\eeq
possesses the categorical equivalent equation
\beq
z_{n+1}-z_{n}-\sum_{j_1,j_2=0}^{n}\frac{(-1)^{j_1+j_2+n-k}}{(n-j_1-j_2-k)!}\frac{z_{j_1}z_{j_2}}{j_1 ! j_2 !}=0. \label{nonlincase}
\eeq
A two parametric family of solutions of eq. \eqref{nonlincase} is provided by
\beq
z_{n}=\sum_{l=0}^{n}g(m)\frac{n!}{(n-m)!}, \qquad \text{with} \quad g(m)=\frac{1}{m!}\frac{d^{m}}{dt^{m}}\frac{-(k+1)}{t^{k+1}+c_1+k c_2}.
\eeq
\subsection{Difference equations associated with classical orthogonal polynomials}
The previous theory allows to introduce integrable maps admitting as solutions orthogonal polynomials.
Here the examples of this general construction realted to the classical orthogonal polynomials are discussed briefly.
\subsubsection{Discrete Hermite equation}
The differential equations for the Hermite polynomials
\beq
z''(t)-t z'(t)+m z(t)=0, \hspace{10mm} m\in\mathbb{N}
\eeq
by using Theorem \ref{main1} can be discretized in the form
\beq
z_{n+2}-2z_{n+1}+z_{n}-\sum_{k=0}^{n}\sum_{j=0}^{k-1}(-1)^{k-1-j}\frac{1}{j! (k-1-j)!}\frac{n!}{(n-k)!}\left(z_{j+1}-z_{j}\right)+m z_{n}=0. \label{discrHermite}
\eeq
Eq. \eqref{discrHermite} for each value of $m\in\mathbb{N}$ admits as a solution the $m$-th degree polynomial $\frak{h}_{m}(n)$,
obtained by the action of the morphism $\lambda_{\partial, \Delta}$ on the $m$-th degree Hermite polynomials
\beq
H_{m}(x)=\sum_{k=0}^{m}h_{k}x^{k} \longrightarrow \frak{h}_{m}(n):=\sum_{k=0}^{m}h_{k}\frac{n!}{(n-k)!}  .
\eeq
The map \eqref{discrHermite} is a model for the (gauge rotated) Hamiltonian of a \textit{discrete quantum mechanical harmonic oscillator}. In this setting, $m$ plays the role of the quantized energy of the odel, and the polynomials $\frak{h}_{m}$ represent the associated eigenfunctions.


\subsubsection{Discrete Jacobi equation}
By applying the previous technique, the equation
\beq
(1-t^2)z''(t)+\left[\beta-\alpha-(\alpha+\beta+2)t\right]z'(t)+m(m+\alpha+\beta+1)z(t)=0
\eeq
can be discretized in the form
\beqa
  \hspace{-13mm}\nn & &\left(z_{j+2}-2z_{j+1}+z_{j}\right)+(\beta- \alpha)\left(z_{j+1}-z_{j}\right) +m(m+\alpha+\beta+1)z_{n}+ \\
  \nn & &\sum_{k=0}^{n}\sum_{j=0}^{k-2}(-1)^{k-2-j}\frac{1}{j! (k-2-j)!}\frac{n!}{(n-k)!}\left(z_{j+2}-2z_{j+1}+z_{j}\right)+\\
\nn &-& (\alpha+\beta+2)\sum_{k=0}^{n}\sum_{j=0}^{k-1}(-1)^{k-1-j}\frac{1}{j! (k-1-j)!}\frac{n!}{(n-k)!}\left(z_{j+1}-z _{j}\right)=0.
\eeqa
This equation admits as a solution the family of polynomials $\{J_{m}^{(\alpha,\beta)}(n)\}_{m\in\mathbb{N}}$, where
\beq
J_{m}^{(\alpha,\beta)}(n):=\frac{\Gamma(\alpha+m+1)}{m!\Gamma(\alpha+\beta+m+1)}\sum_{k=0}^{m}\binom{m}{k}\frac{\Gamma(\alpha+\beta+m+k+1)}{\Gamma(\alpha+m+1)}
\left(\frac{1}{2}\right)^{k}\frac{(n-1)!}{(n-k-1)!}.
\eeq
Virtually all classes of orthogonal polynomials, including nonclassical and Sobolev-type ones, can be associated with integrable maps in a analogous way. The analysis of other similar cases is left to the reader.
\section*{Acknowledgment}




This research has been supported by the grant FIS2011--22566, Ministerio de Ciencia e Innovaci\'{o}n, Spain.

\end{document}